\documentclass[acmsmall]{acmart}
\settopmatter{printfolios=true,printccs=true,printacmref=false}

\AtBeginDocument{%
  \providecommand\BibTeX{{%
    \normalfont B\kern-0.5em{\scshape i\kern-0.25em b}\kern-0.8em\TeX}}}

\setcopyright{none}
\renewcommand\footnotetextcopyrightpermission[1]{}

\acmJournal{JACM}
\acmVolume{37}
\acmNumber{4}
\acmArticle{111}
\acmMonth{8}


\citestyle{acmauthoryear}

\usepackage{xspace}
\usepackage{proof}
\usepackage[utf8]{inputenc}
\usepackage{graphicx}
\usepackage{pdfpages}
\usepackage{amsmath,amsthm}
\usepackage{amscd}
\usepackage{float}
\usepackage{epsfig}
\usepackage{xcolor}
\usepackage{color}
\usepackage{subcaption}
\captionsetup{compatibility=false}
\usepackage{moresize}
\usepackage{stmaryrd}
\usepackage{mathrsfs}
\usepackage{mathtools}

\usepackage{mathpartir}

\usepackage{hhline}
\usepackage{changepage}
\usepackage{wrapfig}

\usepackage{xr}
\usepackage{threeparttable}
\usepackage{wrapfig}
\usepackage{epstopdf}

\newcommand{\throughput}{\textsf{throughput}}
\newcommand{\latency}{\textsf{latency}}
\newcommand{\throughputshort}{\textsf{thrpt}}
\newcommand{\latencyshort}{\textsf{ltncy}}
\newcommand{\utilization}{\textsf{utilization}}

\newcommand{\name}{{\sc Net10Q}}
\newcommand{\baseline}{{\sc Net10Q}{\sf -NoPrune}}



 

\newcommand{\Compare}{\textsc{Compare}}
\newcommand{\Validate}{\textsc{Propose}}

\newcommand{\Quality}{\textrm{Info}}

\newcommand{\defeq}{\stackrel{\textrm{def}}{=}}

\newcommand{\compareupdate}{{\color{blue} {\bf update}}}
\newcommand{\generatemore}{{\color{blue} {\bf generate-more}}}

\newcommand{\Rem}[3]{\big| #1 #2 #3 \big|}
\newcommand{\RemNEQ}[2]{\Rem{#1}{\lesssim_O}{#2}}
\newcommand{\RemNEQQ}[2]{\Rem{#2}{\gtrsim_O}{#1}}
\newcommand{\RemEQ}[2]{\Rem{#1}{\nsim_O}{#2}}
\newcommand{\RemOldBest}[1]{\Rem{#1}{>_O}{r\_best}}
\newcommand{\RemNewBest}[1]{\big| \textit{NewBest}(#1) \big| }

\font\xtiny=cmr10 scaled 600
\definecolor{siqrcolor}{gray}{0}

\def\tableentryraw#1#2#3{#1\parbox{17pt}{\raggedleft\color{siqrcolor}\xtiny#2#3\normalcolor}}
\def\tableentry#1#2#3{%
  \ifx\\#1\\%
    \multicolumn{1}{c|}{\scriptsize$\infty$}
  \else%
    \ifx\\#2\\%
      \tableentryraw{#1}{\centering\tiny$\infty$}{}%
    \else%
      \ifx\\#3\\%
        \tableentryraw{#1}{#2}{}%
       \else%
        \tableentryraw{#1}{#2}{(#3)}%
       \fi
    \fi%
  \fi%
}

\def\timeout#1{\tableentry{}{}{}}
\definecolor{darkgray}{gray}{.6}
\definecolor{lightgray}{gray}{.8}



\newcommand{\hide}[1]{} 

\hyphenation{spmd cegis Sketch MSL Jacobi}




%

%

%

%

%
\numberwithin{equation}{section}




\newcommand{\codecolor}[1]{\textcolor{blue}{\textbf{\textsf{\small #1}}}}








\definecolor{dkgreen}{rgb}{0,0.3,0}
\definecolor{gray}{rgb}{0.5,0.5,0.5}
\definecolor{mauve}{rgb}{0.58,0,0.82}
\definecolor{light-gray}{gray}{0.80}

\usepackage{listings}
\lstdefinelanguage{spmd}{
  morekeywords = {
       loc, bit, bool, true, false
     , implements, harness
     , null
     , assert, assume
     , else
     , find, fix, fold, for, forall, function
     , generator, gen
     , if, while, int, float, bool, string
     , loop, simple, cond, val
     , fork, join
     , nil, null, none, new, malloc
     , option, or
     , ref, return
     , spmdfork, nprocs, spmdtransfer
     , void
     , concrete, sym
     , requires, ensures
     , invariant, decreases 
     , conj, exp
     , init, stmt},
  literate=
    {-}{--}1,
  morecomment=[l]{//}
}
\lstset{
  language=spmd,
  columns=flexible,
  basicstyle=\fontfamily{lmss}\fontsize{7.5pt}{5pt}\selectfont,
  numbers=none,
  numbersep=3pt,
  numberstyle=\tiny,
  stepnumber=1,
  tabsize=2,      
  breaklines=false,  
  breakatwhitespace=true,
  mathescape=true,
  commentstyle=\color{cyan},
  escapechar=^,
}

\renewcommand{\scriptsize}{\fontsize{8.5}{9}\selectfont}
\makeatletter
\lst@AddToHook{TextStyle}{\let\lst@basicstyle\scriptsize\fontfamily{lmss}\selectfont}
\makeatother

\usepackage{url}






%

\usepackage{rotating}

\lstset{language=C,columns=fullflexible, basicstyle=\sffamily\small,escapeinside={/**}{*/}}
\usepackage{enumitem}

\usepackage{algorithm2e}

\SetCommentSty{commentfont}
\usepackage{pstricks}
\usepackage{multirow}

\newcommand{\code}[1]{\textsf{#1}}

\newcommand{\conf}[1]{}

\newcommand{\ctr}{{\code{ctr}}}



\begin{document}

\title{Comparative Synthesis: Learning Near-Optimal Network Designs by Query}


\author{Yanjun Wang}
\affiliation{
  \institution{Purdue University}
  \city{West Lafayette}
  \state{IN}
  \postcode{47907}
  \country{USA}
}
\email{wang3204@purdue.edu}
\author{Zixuan Li}
\affiliation{
  \institution{Purdue University}
  \city{West Lafayette}
  \state{IN}
  \postcode{47907}
  \country{USA}
}
\email{li3566@purdue.edu}
\author{Chuan Jiang}
\affiliation{
  \institution{Purdue University}
  \city{West Lafayette}
  \state{IN}
  \postcode{47907}
  \country{USA}
}
\email{jiang486@purdue.edu}
\author{Xiaokang Qiu}
\affiliation{
  \institution{Purdue University}
  \city{West Lafayette}
  \state{IN}
  \postcode{47907}
  \country{USA}
}
\email{xkqiu@purdue.edu}
\author{Sanjay G. Rao}
\affiliation{
  \institution{Purdue University}
  \city{West Lafayette}
  \state{IN}
  \postcode{47907}
  \country{USA}
}
\email{sanjay@ecn.purdue.edu}

%
\begin{abstract}
When managing wide-area networks, network architects must decide how to balance multiple conflicting metrics, and ensure fair allocations to competing traffic while prioritizing critical traffic.
The state of practice poses challenges since architects must precisely encode their intent into formal optimization models using abstract notions such as utility functions, and ad-hoc manually tuned knobs. In this paper, we present the first effort to synthesize optimal network designs with indeterminate objectives using an interactive program-synthesis-based approach.
We make three contributions. 
First, we present comparative synthesis, an interactive synthesis framework which produces near-optimal programs (network designs) through two kinds of queries (\Validate{} and \Compare{}), without an objective explicitly given. 
%
Second, we develop 
the first learning algorithm for comparative synthesis
in which a voting-guided learner picks the most informative query in each iteration. We present theoretical analysis of the convergence rate of the algorithm. 
%
Third, we implemented \name{}, a system based on our approach, and demonstrate its effectiveness 
on four real-world network case studies using black-box oracles and simulation experiments, as well
as a pilot user study comprising network researchers and practitioners.
Both theoretical and experimental results show the promise of our approach.

\end{abstract}

\ccsdesc{Software and its engineering~Automatic programming}
\ccsdesc{Networks~Traffic engineering algorithms}



\maketitle

\section{Introduction}
\label{sec:intro}
Synthesizing wide-area computer network designs typically involves solving \textit{multi-objective optimization} problems. For instance, consider the
task of managing the traffic of a wide-area network --- deciding the best routes and allocating bandwidth for them ---
the architect must consider myriad considerations.
She must choose from different routing approaches --- e.g., shortest path routing~\cite{OSPFWeights}, and routing along pre-specified paths~\cite{swan, ffc_sigcomm14}. The traffic may correspond to different classes of applications --- e.g., latency-sensitive applications such as Web search and video conferencing, and elastic applications such as video streaming, and file transfer applications~\cite{b4,swan,hierarchicalBW}. The architect may need to decide how much traffic to admit for each class of applications. It is desirable to make decisions that can ensure high throughput, low latency, and fairness across different applications, yet not all these goals may be simultaneously achievable. Likewise, a network must not only perform acceptably under normal conditions, but also under failures --- however, providing guaranteed performance under failures may require being sacrificing normal performance
~\cite{ffc_sigcomm14,pcf_sigcomm20}.


Traffic engineering formulates network design problems as optimization problems~\cite{OSPFWeights,r3:sigcomm10,b4,hierarchicalBW,pcf_sigcomm20}, e.g., minimizing a weighted sum of link utilization and latency subject to constraints. In this context, architect must provide the objectives as well-defined mathematical functions (which we henceforth refer
to as \emph{target functions}), which is a challenging task in the first place.
Even the simplest target functions may involve several knobs to capture the relative importance of different criteria (e.g., throughput, latency, and fairness, performance under normal conditions vs. failures). These knobs must be manually tuned by the architect in a ``trial and error'' fashion to result in a desired design. Further, many optimization problems (e.g., ~\cite{hierarchicalBW}) require architects to use abstract functions that capture the utility an application sees if 
a given design is deployed.
Utility functions are often non-linear (e.g., logarithmic) 
and may involve weights, which are not intuitive for a designer to specify in practice~\cite{srikant04}.
Finally, objectives are often chosen in a manner to ensure tractability, rather than necessarily reflecting the true intent of the architect.

This paper presents one of the first attempts to \emph{learn near-optimal network designs with indeterminate objectives}. Our work adopts an interactive, program-synthesis-based approach based on the key insight that when a user has difficulty in providing a concrete objective, it is relatively easy and natural to give preferences between pairs of concrete candidates. The approach may be viewed as a new variant of programming-by-example (PBE), where preference pairs are used as ``examples''  instead of input-output pairs in traditional PBE systems. 
%
%

In this paper, we make the following contributions:

$\bullet$ \textbf{A novel user-interaction paradigm (\S\ref{sec:foundation}).}
We present a rigorous formulation of an interactive synthesis framework which we refer to as \textbf{comparative synthesis}.
As Fig~\ref{fig:overview} shows, the framework consists of two major components: a comparative learner and a teacher 
(a user or a black-box oracle). The learner takes as input a clearly defined qualitative synthesis problem (including a parameterized program and a specification), a metric group and a target function space, and is tasked to find a near-optimal program w.r.t. the teacher's quantitative intent through two kinds of queries --- \Validate~and \Compare. 

The notion of comparative synthesis stems from a recent position paper~\cite{comparative-synthesis}. The preliminary work 
lacks formal foundation and query selection guidance,
and may involve
impractically many rounds of user interaction (see \S\ref{sec:motivation}). In contrast, the formalism of our framework 
enables the design and analysis of learning algorithms that 
strive to minimize the number of queries, and are amenable for real user interaction.

\begin{figure}[t!]
\begin{center}
\includegraphics[scale=0.33]{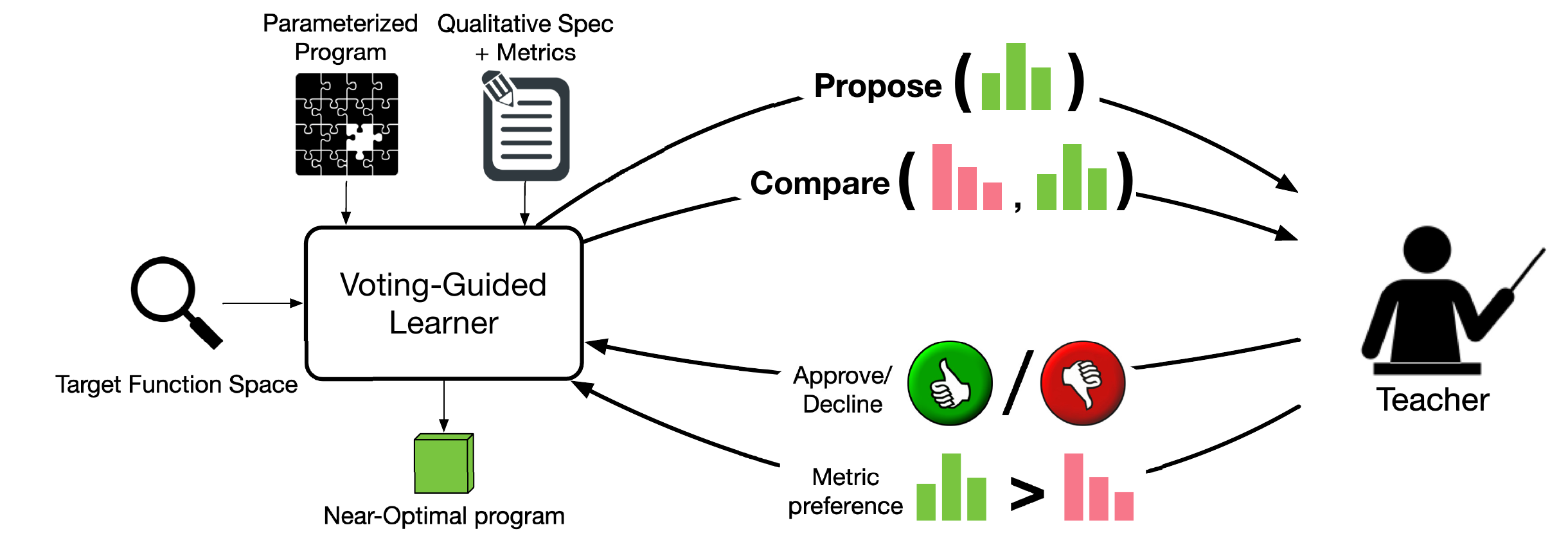}
\end{center}
\caption{Overview of comparative synthesis.}
\label{fig:overview}
\end{figure}

$\bullet$ \textbf{The first algorithm for comparative synthesis (\S\ref{sec:voting}).}
We develop 
the first, voting-guided learning algorithm for comparative synthesis, which provides a provable guarantee on the quality of the found program. 
The key insight behind the algorithm is that objective learning and program search are mutually beneficial and should be done in tandem.
The idea of the algorithm is to search over a special, unified search space we call \emph{Pareto candidate set}, and to pick the \emph{most informative} query in each iteration using a voting-guided estimation. 

We analyze the \emph{convergence} of voting-guided algorithm, i.e., how fast the solution approaches the real optimal as more queries are made. We prove that the algorithm guarantees the median quality of solutions to converge logarithmically to the optimal. When the target function space is sortable, which covers a commonly seen class of problems, a better convergence rate can be achieved --- the median quality of solutions converges linearly to the optimal.



$\bullet$ \textbf{Evaluations on network case studies and pilot user study (\S\ref{sec:evaluation}).}
We developed \name{}, an \emph{interactive network optimization system} based on our approach. We evaluated \name{} on four
real-world scenarios using oracle-based evaluation. Experiments show that \name{} only makes half or less queries than the baseline system to achieve the same solution quality, and robustly produces high-quality solutions with inconsistent teachers. We conducted a pilot study with \name{} among networking researchers and practitioners. Our study shows that user policies are diverse, and \name{} is effective in finding allocations meeting the diverse policy goals in an interactive fashion.


\vspace{.1in}
\textbf{A Lookahead:}
While our motivation and evaluation are from the context of network design, 
the challenge of indeterminate objectives is commonly seen in many quantitative synthesis problems beyond the networking domain. For example, the default ranker for the FlashFill synthesizer is manually designed and highly tuned by experts~\cite{gulwani2019quantitative}. In quantitative syntax-guided synthesis (qSyGuS)~\cite{qsygus}, the objective should be provided as a weighted grammar, which is nontrivial for average programmers.
Therefore, the problem we address in this paper can be viewed as an instance of \emph{specification mining}~\cite{specification-mining}, a long-standing problem in the formal methods community which recognizes that a precise specification may not always be available.
The key contributions of the paper, including comparison-based interaction  (\S\ref{sec:foundation}) and the voting-guided algorithm (\S\ref{sec:voting}), are
thus potentially applicable in 
other more traditional program synthesis domains in the future.

\section{Motivation}
\label{sec:motivation}
In this section, we present background on network design, how it may be formulated as a program synthesis problem, and discuss challenges that we propose to tackle.

\textbf{Network design background.}
In designing Wide-Area Networks (WANs), Internet Service Providers (ISPs) and cloud providers must decide how to provision their networks, and route traffic so their traffic engineering goals are met. Typically WANs carry multiple classes of traffic (e.g., higher priority latency sensitive traffic, and lower priority elastic traffic). Traffic is usually specified as a matrix with cell $(i,j)$ indicating the total traffic which enters the network at router $i$ and that exits the network at router $j$. We refer to each pair ($i,j$) as a flow, or a source-destination pair. It is typical to pre-decide a set of tunnels (paths) for each flow, with traffic split across these tunnels in a manner decided by the architect, though traffic may also be routed along 
a routing algorithm that determines shortest paths (\S\ref{sec:scenarios}).

Given constraints on link capacities, it may not be feasible to meet the requirements of all traffic of all flows. An architect must decide how to allocate bandwidth to different flows of different classes and how to route traffic (split each flow's traffic across it paths) so desired objectives are met. In doing so, an architect must reconcile multiple metrics including throughput, latency, and link utilizations
~\cite{qarc,semioblivious-nsdi18,swan,b4}, ensure fairness 
across flows~\cite{srikant04,B4InfocomMaxMinFairness,hierarchicalBW}, and consider performance under failures~\cite{pcf_sigcomm20,ffc_sigcomm14,r3:sigcomm10,nsdiValidation17,cvarSigcomm19}. 

\textbf{Network design as program synthesis problems.}
Consider a variant of the classical multi-commodity flow
problem used in Microsoft's Software Defined Networking Controller SWAN~\cite{swan}, which we refer to as MCF.
MCF allocates traffic to tunnels optimally trading off
the total throughput seen by all flows with the weighted average flow latency~\cite{swan}. We consider a single class (see \S\ref{sec:scenarios}
for multiple classes).

Fig~\ref{fig:mcf} shows how the demand-capacity constraints may be described as a \emph{sketch-based synthesis} problem, in which the programmer specifies a sketch --- a program that contains unknowns to be solved for, and assertions to constrain the choice of unknowns. The \code{Topology} struct represents 
the network topology (we use
the Abilene topology~\cite{abilene} with 11 nodes, 14 links and 220 flows
as a running example). 
The \code{allocate} function should determine the bandwidth allocation (denoted by \codecolor{??}), which is missing and should be generated by the synthesizer. 
The function also serves as a test harness and checks that the synthesized allocation is valid, satisfying all demand and capacity constraints (lines~\ref{line:assert-start}--\ref{line:assert-end}). Finally, the \code{main} function takes the generated allocation, and computes and returns the total throughput and weighted latency.


\begin{figure}[t!]
\begin{lstlisting}[language=C,morekeywords={bit,new,assert,assume},breaklines=true,mathescape,xleftmargin=.2in,commentstyle=\color{gray}\it,numbers=left,backgroundcolor = \color{white}]
struct Topology {
  int n_nodes; int n_links; int n_flows;
  bit[n_nodes][n_nodes] links; 
  /* every link has a capacity and a weight, every flow consists of multiple links and has a demand */
  float[n_links] capacity; int[n_links] wght;
  bit[n_links][n_flows] l_in_f; float[n_flows] demand; ... 
}
Topology abilene = new Topology(n_nodes=11, n_links=220, $\dots$);
float[] allocate(Topology T) {
  float[T.n_flows] bw = $\color{blue} {\textbf{??}}$; // generate bandwidth allocation /**\label{line:hole}*/ 
  /* assert that every flow's demand is satisfied and every link's bandwidth is not exceeded */
  assert $\displaystyle \bigwedge_i$ bw[$i$] <= T.demand[$i$]; /**\label{line:assert-start}*/ 
  assert $\displaystyle \bigwedge_j \Big(\sum_i$ l_in_f[j][i] ? bw[i] : 0 $\Big) \leq$ T.capacity[j]; /**\label{line:assert-end}*/
  return bw; }
float[] main() {
  float[] alloc = allocate(abilene);
  /*compute the throughput and weighted latency*/
  float throughput = $\displaystyle \sum_{i} \textsf{alloc}[i]$, $~~$ latency = $\displaystyle \sum_{i} \Big( \textsf{alloc}[i] \cdot \sum_j \textsf{l\_in\_f}[j][i] \textsf{? wght[j] : 0} \Big)$;
  return {throughput, latency}; }
\end{lstlisting}
\vspace{-.1in}
\caption{MCF allocation encoded as a program sketch.}
\vspace{-.15in}
\label{fig:mcf}
\end{figure}

Now Fig~\ref{fig:mcf} has encoded all \emph{hard constraints} and represented a \emph{qualitative synthesis} problem, which can be solved by Sketch~\cite{sketch} easily. 
The bandwidth allocations generated by the synthesizer (the values of \code{bw}) is just a network design solving the MCF problem.\footnote{We will use network design and network program interchangeably in the paper, as network design can always be extracted from the synthesized network program.}
However, there are many different ways to fill the \codecolor{??}, corresponding to many different ways of assigning paths and leading to different throughput-latency combinations as computed in \code{main}. Which solution is the most desirable one?
Traditionally, the architect has to explicitly provide a \emph{target function} 
which maps each possible solution to a numerical value indicating the preference. Given a well-specified target function, the bandwidth allocation problem becomes a \emph{quantitative synthesis} problem and can be solved using existing techniques from both synthesis and optimization communities. E.g., in Fig~\ref{fig:mcf}, one can explicitly add a target function $O_{real}$ and use the \code{minimize} construct (cf. Sketch manual~\cite{sketch:manual}) to find the optimal solution. 


\textbf{Why synthesis with indeterminate objectives?}
Unfortunately, in practice, it is hard for network architects to precisely express their true intentions using target functions. For example, to capture the intuition that once the throughput (resp. latency) reaches a certain level, the marginal benefit (resp. penalty) may be smaller (resp. larger), an architect may need to use a target function like below:
\begin{multline}
\label{eqn:real}
{\small
O_{real}(\throughput, \latency) \defeq 2 \cdot \throughput - 9 \cdot \latency } \\
{\small \!\!\!\! - \max(\throughput - 350, 0) - 10 \cdot \max(\latency-28, 0)}
\end{multline}
%
More generally, the marginal reward in obtaining a higher bandwidth allocation is smaller capturing which may require a target function of the form 
$O(\throughput, \latency) \defeq
    \displaystyle 1 \cdot \log_n(\throughput/tmax + 1) + 5 \cdot \log_n(lmin / \latency + 1)
$
where $tmax$ is the maximum possible throughput and $lmin$ is the minimum possible latency.
Expressing such abstract target functions is not trivial, let alone the parameters associated with the functions. We present several other examples in \S\ref{sec:scenarios}.

\begin{wrapfigure}{r}{0.5\linewidth}
    \centering
    \includegraphics[scale=0.5]{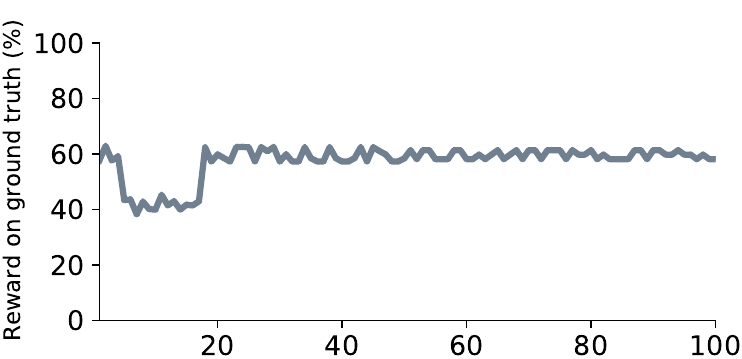}
    \vspace{-.1in}
    \caption{Na\"{i}ve objective synthesis~\cite{comparative-synthesis} of $O_{real}$.}
    \label{fig:obj_first}
    \vspace{-0.1in}
\end{wrapfigure}

\textbf{Na\"{i}ve objective synthesis is not enough.}
A preliminary effort \cite{comparative-synthesis} argued for synthesizing target functions 
by having the learner (synthesizer) iteratively query the teacher (user) on its preference between two concrete network designs. In each iteration, any pair of designs may be compared as long as there exist two target functions that (i) disagree on how they rate these designs, and (ii) both satisfy preferences expressed by the teacher in prior iterations. The process continues until no disagreeing target functions are found.
However, this work only discusses objective learning and does not explicitly consider design synthesis. Moreover, it does not address how to generate good preference pairs to minimize queries.
These limitations make this na\"{i}ve approach not amenable for real user interaction.
Fig~\ref{fig:obj_first} shows the performance of a design optimal for a target function synthesized using this procedure if it were terminated after a given number of queries. The resulting designs achieve a reward only 60\% of the true optimal design under the ground truth (Eq~\ref{eqn:real}), and there is hardly any performance improvement in the first 100 queries.

\section{Comparative Synthesis, Formally}
\label{sec:foundation}

In this section, we provide a formal foundation for the comparative synthesis framework, based on which we design and analyze learning algorithms.
The key novelty of our framework compared to past work on quantitative synthesis~\cite{Cerny2011,qsygus,synapse,superoptimization,superoptimization2,Chaudhuri14} is that comparative synthesis does not 
require the user to explicitly specify the objective. Instead, we approach synthesis via \emph{interaction through comparative queries}
where queries simply involve the users comparing two alternative programs and indicating which is more preferable. 
Since a user will only be willing to answer a small number of queries and may choose to stop at any point of the interactions, finding a perfect quantitative objective can be unrealistic. Therefore, our goal is to \emph{generate a near-optimal program within a budgeted number of queries}. As the real ground truth optimal is not accessible, we also introduce a natural notion, called \emph{quality of solution}, to estimate how close a solution is to optimal.

\textbf{Roadmap.}
In \S\ref{subsec:QuantSynthDef}, we formally define quantitative synthesis, a necessary first step for us to formally treat
comparative synthesis. Rather than restrict quantitative synthesis to objectives which are closed-form mathematical functions,
we formulate quantitative synthesis more generally as we motivate and discuss in \S\ref{subsec:QuantSynthDef}.
We formally define comparative queries, and solution quality in \S\ref{subsec:queryDef}. We conclude with a formal definition of comparative 
synthesis in \S\ref{subsec:CSDef}.



\subsection{Quantitative Synthesis with Metric Ranking}
\label{subsec:QuantSynthDef}
In this section, we present a formal definition of quantitative synthesis, as a first step towards
defining comparative synthesis more precisely. Quantitative synthesis may be viewed as a 
goal of synthesizing a program that meets a set of "hard constraints", while performing well on
a quantitative objective. Rather than restrict quantitative synthesis to objectives which are closed-form 
mathematical functions, we formulate quantitative synthesis when given a ranking over all possible programs 
(which we refer to as a \emph{metric ranking}). This more general formulation is motivated by the fact that we 
wish to capture a rich set of user policies in terms of relative preferences across programs, and not restrict 
the user to objectives that are closed-form mathematical functions. 


We start by defining qualitative synthesis (which captures the ``hard constraints'' that any acceptable program
must meet), and then discuss quantitative synthesis with metric ranking.

\begin{definition}[Qualitative synthesis problem]
A qualitative synthesis problem is represented as a tuple $(\mathcal{P}, C, \Phi)$ where $\mathcal{P}$ is a parameterized program, $C$ is the space of parameters for $\mathcal{P}$, and $\Phi$ is a verification condition with a single free variable ranging among $C$. The synthesis problem is to find a value $\ctr \in C$ such that $\Phi(\ctr)$ is valid.
\end{definition}

\begin{example}
\label{ex:mcf-qual}
Our running example can be formally described as a qualitative synthesis problem $\mathscr{L}_{MCF} = (\mathcal{P}_{\textsf{abilene}}, \mathbb{R}^{220}, \Phi_{\textsf{abilene}})$, where $\mathcal{P}_{\textsf{abilene}}$ is the program sketch presented in Fig~\ref{fig:mcf}, $\mathbb{R}^{220}$ is the search space of unknown hole (line~\ref{line:hole}) which includes 220 bandwidth values of the Abilene network, $\Phi_{\textsf{abilene}}$ is the verification condition, taking a candidate solution $c \in \mathbb{R}^{220}$ as input and checking whether $\mathcal{P}_{\textsf{abilene}}[c]$ satisfies all assertions in Fig~\ref{fig:mcf}. Any solution that satisfies assertions in Fig~\ref{fig:mcf} is a feasible program to the qualitative synthesis problem $\mathscr{L}_{MCF}$.
\end{example}

While a qualitative synthesis problem captures all hard constraints, there are potentially infinitely many solutions. Which one is the most desirable? \emph{Quantitative synthesis} concerns itself with this question and
extends a qualitative synthesis problem with a quantitative goal, which is evaluated using a metric group, as defined below.

\begin{definition}[Metric]
Given a parameterized program $\mathcal{P}[c]$ where $c$ ranges from a search space $C$, a metric with respect to $\mathcal{P}$ is a computable function $m_{\mathcal{P}}: C \rightarrow \mathbb{R}$. In other words, $m_{\mathcal{P}}$ takes as input a concrete program in the search space and computes a real value.
\end{definition}

\begin{definition}[Metric group]
Given a parameterized program $\mathcal{P}$, a $d$-dimensional metric group $M$ w.r.t. $\mathcal{P}$ is a sequence of $d$ metrics w.r.t. $\mathcal{P}$. We write $M_i$ for the $i$-th metric in the group and $M(c)$ for the value vector $\big(M_1(c), \dots, M_d(c)\big)$.
\end{definition}


\begin{example}
A metric can be computed from the syntactical aspects of the program. For example, a metric $\code{size}_{\mathcal{P}}(c)$ can be defined as the size of the parse tree for $\mathcal{P}[c]$.
\end{example}

\begin{example}
\label{ex:mcf-metric}
A metric can 
simply be
the value of a variable on a particular input (or with no input). In our running example in Fig~\ref{fig:mcf}, the two variables \code{throughput} and \code{latency} of the \code{main} function can be used to define two metrics. As the latency as a metric is not beneficial, i.e., smaller latency is better, we can simply use its inverse \code{-latency} as a beneficial metric. The two metrics form a metric group $M_{MCF} = (\code{throughput}, \code{-latency})$.
\end{example}


Now given a metric group, the quantitative intent of a user can be captured either \emph{syntactically} (using a target function) or \emph{semantically} (using a metric ranking). We formally define them below and discuss their relationship.

\begin{definition}[Target function]
Given a metric group $M$, a \emph{target function} with respect to $M$ is a computable function $\mathbb{R}^{|M|} \rightarrow \mathbb{R}$.
\end{definition}

\begin{definition}[Metric ranking]
\label{def:ranking}
Given a $d$-dimensional metric group $M$, a \emph{metric ranking} for $M$ is a total preorder $\lesssim_M$ over $\mathbb{R}^{d}$.
In other words, $\lesssim_M$ satisfies the following properties:
 for any $u \in \mathbb{R}^{d}$, $u \lesssim_M u$ (\emph{reflexivity});
 for any three vectors $u, v, w \in \mathbb{R}^{d}$, if $u \lesssim_M v$ and $v \lesssim_M w$, then $u \lesssim_M w$ (\emph{transitivity});
 for any two vectors $u, v \in \mathbb{R}^{d}$, $u \lesssim_M v$ or $v \lesssim_M u$ (\emph{connexivity}).
\end{definition}


We write $u \sim_M v$ if $u \lesssim_M v$ and $v \lesssim_M u$. In this paper, we also flexibly write $u \gtrsim_M v$, $u <_M v$, $u >_M v$ and $u \nsim_M v$ with the expected meaning. Moreover, we also abuse $\lesssim_M$ and other derived symbols we just described as relations between programs: when the metric group $M$ is clear from the context, for any two program parameters $c_1, c_2 \in \textrm{dom}(M)$, we write $c_1 \lesssim_M c_2$ to indicate that $M(c_1) \lesssim_M M(c_2)$.

\paragraph{\textbf{Why metric ranking?}} Target function and metric ranking are closely related, but metric ranking is a more general and unique representation for quantitative intent, and can capture a richer set of user policies in terms of which of two feasible programs is preferable. First, 
every target function implicitly determines a metric ranking (see Def~\ref{def:of-mc} below).
Second, multiple target functions may have the same metric ranking. For example, any target functions $O$ and $O'$ such that $O'(v) = 2 \cdot O(v)$ for any $v \in \mathbb{R}^{|M|}$ have the same metric ranking. Third, some metric ranking does not correspond to any target function, e.g., one can define a metric ranking $\lesssim$ between integer metric values such that $n_1 \lesssim n_2$ if and only if the $n_1$-th digit of $\Omega$ is less than or equal to the $n_2$-th digit of $\Omega$, where $\Omega$ is a Chaitin's constant~\cite{chaitin} representing the probability that a randomly generated program halts.
To this end, we define quantitative synthesis problem below using metric ranking.

\begin{definition}
\label{def:of-mc}
Given a target function $O$ w.r.t. a $d$-dimensional metric group $M$, the corresponding metric ranking $\lesssim_O \subset \mathbb{R}^{d} \times \mathbb{R}^{d}$ is defined as follows: for any two program parameters $c_1, c_2 \in \textrm{dom}(M)$, $c_1 \lesssim_O c_2$ if and only if $O(M(c_1)) \leq O(M(c_2))$. It can be easily verified that $\lesssim_O$ is indeed a metric ranking.
\end{definition}


\begin{definition}[Quantitative synthesis problem]
\label{def:quantitative}
A quantitative synthesis problem is represented as a tuple $\mathscr{Q} = (\mathcal{P}, C, \Phi, M, \lesssim_M)$ where $(\mathcal{P}, C, \Phi)$ forms a qualitative synthesis problem $\mathscr{Q}^{\textrm{qual}}$, $M$ is a metric group w.r.t. $\mathcal{P}$ and $\lesssim_M$ is a metric ranking for $M$. The synthesis problem is to find a solution $\code{ctr}$ to $\mathscr{Q}^{\textrm{qual}}$ such that for any other solution $\code{ctr'}$, $\code{ctr'} \lesssim_{M} \code{ctr}$.
\end{definition}

\begin{example}
\label{ex:mcf-quan}
With the metric group $M_{MCF}$ defined in Example~\ref{ex:mcf-metric}, the function $O_{real}$ defined in Equation~\ref{eqn:real} is a 2-dimensional target function with the corresponding metric ranking $\lesssim_{{real}}$. Then the qualitative synthesis problem $\mathscr{L}_{MCF} = (\mathcal{P}_{\textsf{abilene}}, \mathbb{R}^{220}, \Phi_{\textsf{abilene}})$ can be extended to a quantitative synthesis problem 
$
\mathscr{Q}_{MCF} \defeq (\mathcal{P}_{\textsf{abilene}},~ \mathbb{R}^{220},~ \Phi_{\textsf{abilene}},~ M_{MCF},~ \lesssim_{{real}})
$.
\end{example}

\subsection{Interaction Through Comparative Queries}
\label{subsec:queryDef}


Quantitative synthesis problem as defined in Def~\ref{def:quantitative} expects a metric ranking explicitly or implicitly (e.g., through a target function). Comparative synthesis is more challenging as it seeks to synthesize a program that is near-optimal in terms of the objective, but without being explicitly given the objective. To achieve the goal, our comparative synthesis framework is interactive between a {\bf learner} and a {\bf teacher} (see Fig~\ref{fig:overview}). As the teacher may choose to stop at any point of the interactions, our comparative synthesis framework maintains the best candidate solution found through the synthesis process and recommends the best solution confirmed by the teacher when terminated.

As the quantitative objective is assumed to be very complex and not direct accessible from the teacher, the comparative learner can only make several types of queries to the teacher, whose responses provide indirect access to the specifications. The query types serve as an interface between the learner and the teacher, and different query types lead to different synthesis power (e.g., see the query types discussed in~\cite{ogis}). 
What makes our framework special is that the learner can make queries about the \emph{metric ranking} --- queries that 
compare two program (based on their corresponding metric value vectors).

Let us fix a parameterized program $\mathcal{P}$ and a metric group $M$. The learner and the teacher interact using two types of queries:~\footnote{We believe our framework can be extended in the future to support more types of queries.}

$\bullet$ $\Compare(c_1, c_2)$ {\bf query:} The learner provides two concrete programs $\mathcal{P}[c_1]$ and $\mathcal{P}[c_2]$, and asks ``Which one is better under the target metric ranking $\lesssim_M$?'' 
The teacher responds with $<$ or $>$ if one is strictly better than the other, or $=$ if $\mathcal{P}[c_1]$ and $\mathcal{P}[c_2]$ are considered equally good.

$\bullet$ $\Validate(c)$ {\bf query:}~\footnote{Note that $\Validate(c)$ can be viewed as a special $\Compare$ query between $c$ and $r\_best$, the goal of the query is slightly different: $\Validate(c)$ intends to beat the running best using $c$, while $\Compare(c_1, c_2)$ intends to distinguish very close solutions $c_1$ and $c_2$ to learn the teacher's intent. $\Compare(c_1, c_2)$ will \textbf{not} update the running best.} The learner proposes a candidate program $\mathcal{P}[c]$ and asks ``Is $\mathcal{P}[c]$ better than the running best candidate $r\_best$?'' 
If the teacher finds that $\mathcal{P}[c]$ is not better than the running best, 
she can respond with $\bot$.
Otherwise, the teacher considers that $\mathcal{P}[c]$ is better 
and responds with $\top$; in that case, the running best will be updated to $\mathcal{P}[c]$.



\vspace{.05in}
Now in comparative synthesis, the specification (a metric ranking $R$) is hidden to the learner. Instead, the learner can approximate/guess the specifications by making queries to the teacher. Ideally, the teacher should be \emph{perfect}, i.e., the responses she makes to queries are always consistent and satisfiable. Formally,
\begin{definition}[Perfect teacher]
A teacher $\mathcal{T}$ is \emph{perfect} if there exists a metric ranking $\lesssim_M$ such that: 1) for any query $\Compare(v_1, v_2)$, the response is ``$<$'' if $v_1 \lesssim_M v_2$ and $v_2 \not \lesssim_M v_1$, or ``$>$'' if $v_1 \not \lesssim_M v_2$ and $v_2 \lesssim_M v_1$, or ``$=$'' if $v_1 \lesssim_M v_2$ and $v_2 \lesssim_M v_1$;
2) for any query $\Validate(c)$ with the current running best $r\_best$, the response is ``$\top$'' if $c >_M r\_best$; or ``$\bot$'' otherwise. 
\end{definition}
We denote the perfect teacher w.r.t. $\lesssim_M$ as $\mathcal{T}_{\lesssim_M}$. We also denote the metric ranking $\lesssim_M$ represented by a perfect teacher $\mathcal{T}$ as $\lesssim_{\mathcal{T}}$.
A perfect teacher guarantees that an optimal solution exists among all candidates.
For now, let us assume that the teacher is perfect, i.e., consistent and able to answer all queries; but in the real world, a human teacher may be inconsistent and responds incorrectly. We do consider imperfect teachers in our evaluation (see \S\ref{sec:oracle}).


\paragraph{\textbf{Why budgeted number of queries?}} Ideally, the goal of the learner is to find an objective target (in the form of target function or metric ranking) that matches the teacher's mind and the corresponding optimal program that optimizes the objective. However, finding the target function can be impossible as the objective target may have no closed-form representation and not in the target function space. As the teacher is free to terminate the synthesis process at any point, pinpointing the target function in a potentially infinitely large space can also be impossible, even if the target function is in the target function space.

Therefore, the goal of the learner is \emph{to spend a budgeted number of queries and to produce a near-optimal program from the perspective of the teacher}. Note that the learner may use a conjectured objective to guide the search process, finding a perfect target function is not a goal. This is also a key insight for our algorithm design (cf. \S\ref{sec:voting}).

Now to determine how close a solution is to the ground truth optimal, we introduce a natural notion called \emph{quality of solution} which is intuitively the ``relative rank'' of the solution among all solutions. E.g., a solution of quality 0.9 is better than or equal to 90\% of possible solutions. From a probability theory point of view, the quality is just the cumulative distribution function (CDF).
Below we formally define the quality of solutions.

\begin{definition}[Quality of solution]
\label{def:quality}
Let $\mathscr{Q} = (\mathcal{P}, C, \Phi, M, \lesssim_{M})$ be a quantitative synthesis problem and let $\ctr$ be a solution to $\mathscr{Q}^{\textrm{qual}}$. The quality of $\ctr$ is defined as
\[\textrm{Quality}_{\mathscr{Q}}(\ctr) \defeq \displaystyle P(\ctr \geq_{\mathcal{T}} X_{\mathscr{Q}}^{M})\]
where $X_{\mathscr{Q}}^{M}$ is a variable randomly sampled from the uniform distribution for $\textrm{Solutions}(\mathscr{Q}^{\textrm{qual}})$.
\end{definition}
In particular, when $\textrm{Quality}_{\mathscr{Q}}(c) = 1$, $c$ is better than or equal to all other possible solutions, i.e., is the optimal solution under the teacher's preference.
Note that computing the exact quality can be very expensive, if not impossible. However we can \emph{estimate} the quality by sampling, as we do in evaluation (see \S\ref{sec:oracle}).

\begin{example}
\label{ex:mcf-compare}
The quantitative synthesis problem $\mathscr{Q}_{MCF}$ in Example~\ref{ex:mcf-quan} involves a metric ranking $\lesssim_{real}$. 
Let $\mathcal{T}_{real}$ be a perfect teacher w.r.t. $\lesssim_{real}$. Table~\ref{tab:example-run} illustrates how a voting-guided learning algorithm (which we present later in \S\ref{sec:voting}) serves as the learner and learns a near-optimal solution to $\mathscr{Q}_{MCF}$ through queries to $\lesssim_{real}$. 
In the first iteration, the learner solves the synthesis problem in Fig~\ref{fig:mcf} and gets a first mediocre allocation $P_0$
and
presents it to the architect, using query $\Validate(P_0)$. The teacher accepts the proposal as this is the first running best. 
In the sixth iteration,
the learner presents two programs $P_6$ and $P_7$ to the teacher and asks her to compare them. Based on the feedback that the architect prefers $P_6$ to $P_7$, the learner proposes $P_8$ which is confirmed by the teacher to be the best program so far. After seven queries, the running best is already very close to the optimal under the real objective (Quality of this solution has already achieved $97.8\%$). If the teacher wishes to answer more queries, the solution quality can be further improved.
\end{example}

\begin{table}
\caption{A Comparative Synthesis run for Example~\ref{ex:mcf-compare}}
\vspace{-.1in}
\centering
\scriptsize
\begin{tabular}{|c|c|c|c|c|c|}
\hline
Iter & Candidate Allocation & Query & Response & Running Best & Quality \\ \toprule
$1$ & $P_0(\throughputshort=205.2, \latencyshort=10.3)$ & $\Validate(P_0)$ & $\top$ & $P_0$ & 32.8\% \\ \hline
$2$ & $P_1(\throughputshort=470.2, \latencyshort=33.0)$ & $\Validate(P_1)$ & $\top$ & $P_1$ & 73.6\%  \\ \hline
$3$ & $P_2(\throughputshort=385.2, \latencyshort=24.5)$ & $\Validate(P_2)$ & $\top$ & $P_2$ & 92.8\%  \\ \hline
\dots & \dots & \dots & \dots & \dots & \dots  \\ \hline
\multirow{2}{*}{$6$} & $P_6(\throughputshort=405.4, \latencyshort=26.5)$ & \multirow{2}{*}{$\Compare(P_6, P_7)$} & \multirow{2}{*}{$P_6$} & \multirow{2}{*}{$P_2$} & \multirow{2}{*}{92.8\%}  \\ \cline{2-2}
& $P_7(\throughputshort=377.8, \latencyshort=23.8)$ &  &  &  &  \\ \hline
$7$ & $P_8(\throughputshort=392.9,\latencyshort=25.3)$ & $\Validate(P_8)$ & $\top$ & $P_8$ & 97.8\% \\ \hline
\end{tabular}
\label{tab:example-run}
\vspace{-.1in}
\end{table}

\subsection{The Comparative Synthesis Problem}
\label{subsec:CSDef}

Given the approximation nature of query-based interaction and the quality of solution defined above, the learner is tasked to solve what we call \emph{comparative synthesis problem}, which is formally defined below.
\begin{definition}[Comparative synthesis problem]
\label{def:comparative}
A comparative synthesis problem is represented as a tuple $\mathscr{C} = (\mathcal{P}, C, \Phi, M, \mathcal{T})$ where $\mathcal{P}$ is a parameterized program, $C$ is the space of parameters for $\mathcal{P}$, $\Phi$ is a verification condition for $\mathcal{P}$, $M$ is a metric group w.r.t. $\mathcal{P}$ and $\mathcal{T}$ is a perfect teacher,
such that $(\mathcal{P}, C, \Phi, M, \lesssim_{\mathcal{T}})$ forms a quantitative synthesis problem, which is denoted as $\mathscr{Q}$.
The synthesis problem is to find, by making a sequence of $\Compare$ and $\Validate$ queries to the teacher $\mathcal{T}$, a near-optimal solution $\ctr$ to $\mathscr{Q}$ with a provable guarantee on $\textrm{Quality}_{\mathscr{Q}}(\ctr)$.
\end{definition}



\section{Voting-Guided Learning Algorithm}
\label{sec:voting}

In this section, we focus on the learner side of the framework and propose a voting-guided learning algorithm that can play the role of the comparative learner and solve the comparative synthesis problem.
Below we propose a novel search space combining the program search and objective learning, then present an estimation of query informativeness, based on which our voting-guided algorithm is designed. We discuss the convergence of the algorithm at last.

\subsection{A Unified Search Space}

A syntactical and natural means to describing quantitative specification is \emph{target functions} (in contrast to the semantically defined metric ranking in Def~\ref{def:ranking}).
Now to solve a comparative synthesis problem efficiently, an explicit task of the learner is \emph{program search}: the goal is to minimize human interaction (i.e., the number of queries) and maximize the quality of the solution (see Def~\ref{def:quality}) proposed through \Validate~queries. Another implicit task of the learner is \emph{objective learning}: to steer program search faster to the optimal and minimize the query count, the learner should
conjecture target functions that fit the teacher-provided preferences, and use them to determine which programs are more likely to be optimal. Note that the conjectured target function need not (and sometimes cannot) be perfect --- the goal is just to \emph{approximates} the teacher's metric ranking $\lesssim_{\mathcal{T}}$. 

Our key insight is that
the two tasks 
are inherently tangled and better be done together. On one hand, the quantitative synthesis task needs to be guided by an appropriate objective; otherwise the search is blind and unlikely to steer to those candidates satisfying the user. On the other hand, learning a perfect target function can be extremely expensive (if not impossible --- see the ``why metric ranking'' discussion in \S\ref{sec:foundation}) and unnecessary --- even an inaccurate target function may guide the program search. We first define the target function space.

\begin{definition}[Target function space]
\label{def:objective-space}
A target function space $\mathcal{O}$ is a set of target functions with respect to a $d$-dimensional metric group $M$ such that for any metric ranking $\lesssim_M \subset \mathbb{R}^{d} \times \mathbb{R}^{d}$ and any finite subset $S \subset_{\textrm{fin}} \mathbb{R}^{d}$, there exists a target function $O \in \mathcal{O}$ such that for any $u, v \in S$, $u \lesssim_M v$ if and only if $O(u) \leq O(v)$.
\end{definition}

\begin{example}
\label{ex:clia}
The class of \emph{conditional linear integer arithmetic} (CLIA) functions forms a target function space. 
A CLIA target function, intuitively, uses linear conditions over metrics to divide the domain into multiple regions, and defines in each region as a linear combination of metric values. Formally, for any $d$-dimensional metric group $M$, a target function space $\mathcal{O}_{CLIA}^d$ can be defined as the class of expressions derived from the nonterminal $T$ of the following grammar:
\begin{align*}
\small
T & ~::=~ E ~\mid~ \code{if}~ B~ \code{then}~ T~ \code{else}~ T~~~~~~~ \\
B & ~::=~ E \geq 0 ~\mid~ B \land B ~\mid~ B \lor B~~~~ \\
E & ~::=~ v_1 ~\mid~ \dots ~\mid~ v_d ~\mid~ c ~\mid~ E + E ~\mid~ E - E
\end{align*}
where $c \in \mathbb{Z}$ is a constant integer, and $v_i$ is the $i$-th value of the metric vector. It is not hard to see that $\mathcal{O}_{CLIA}^d$ is indeed a target function space, because with arbitrarily many conditionals, one function can be constructed to fit \emph{any finite subset} of any metric ranking.
\end{example}

\begin{example}
\label{ex:mcf-space}
 While the CLIA space is general enough for arbitrary metric group, it can be too large to be efficiently searched. For many concrete metric groups, more target function spaces usually exist.
 For our running example, a commonly used function to quantify this trade-off is the multi-commodity flow functions used in software-driven WAN~\cite{swan}. The $O_{real}$ function (Equation~\ref{eqn:real}) in our running example is an instance of the generalized, two-segment MCF function space, which can be described in the following form:
 \vspace{-.05in}
\begin{multline*} \label{eq:tmplt}
{\small
O(\throughput, \latency) \defeq \throughput~*~?? ~-~ \max(\throughput - ??,~ 0) ~*~ ??} \\
{\small ~-~ \latency ~*~ ?? ~-~ \max(\latency~-~??,~ 0) ~*~ ??}
\end{multline*}
where $??$ can be arbitrary weights or thresholds. Note that the two-segment template is insufficient to characterize arbitrary finite metric ranking. In that case, the template may be extended to more segments. We call the whole target function space $\mathcal{O}_{MCF}$.
\end{example}

Now as the learner's task is to search two spaces --- one for programs and one for target functions --- we merge the two tasks into a single one, searching over a unified search space which we call \emph{Pareto candidate set}: 

\begin{definition}[Pareto candidate set]
\label{def:pareto}
Let $\mathscr{C} = (\mathcal{P}, C, \Phi, M, \mathcal{T})$ be a comparative synthesis problem and $\mathcal{O}$ be a target function space w.r.t. $M$. A \emph{Pareto candidate set} (PCS) with respect to $\mathscr{C}$ and $\mathcal{O}$ is a finite partial mapping $\mathscr{G} : \mathcal{O} \nrightarrow C$ from a space of target functions $\mathcal{O}$ to a space of program parameters $C$, such that for any $O \in \mathcal{O}$, $\mathscr{G}(O)$ is the effectively optimal solution under target function $O$, i.e., a solution $c \in C$ such that $\mathscr{G}(O) \lesssim_O c$, if exists, cannot be effectively found. Specifically, for any other $O' \in \mathcal{O}$, $\mathscr{G}(O') \lesssim_O \mathscr{G}(O)$.

\end{definition}
Intuitively, a Pareto candidate set (PCS) $\mathscr{G}$ maintains a set of candidate target functions, a set of candidate programs, and a mapping between the two sets, and guarantees that every candidate target function $O$ is mapped to the best candidate program under $O$.\footnote{Note that in general, the best candidate program under $O$ is not necessarily unique. To break the tie and make $\mathscr{G}$ uniquely determined by the component sets of target functions and programs, when two candidate programs $c_1$ and $c_2$ both get the highest reward under $O$, we assume $\mathscr{G}(O)$ is $c_1$ if $M(c_1)$ is smaller than $M(c_2)$ in lexicographical order, or $c_2$ otherwise.}

\subsection{Query Informativeness}

Now with the unified search space --- PCS in Def~\ref{def:pareto} --- comparative synthesis becomes a game between the learner and the teacher: a PCS $\mathscr{G}$ is maintained as the current search space; and in each iteration, the learner makes a query and the teacher gives her response, based on which $\mathscr{G}$ is shrunken. The learner's goal should be, in each iteration, to pick the most informative query in the sense that it can reduce the size of $\mathscr{G}$ as fast as possible. 
The key question is how to evaluate the informativeness of a query.

In this paper, we develop a greedy strategy which evaluates the informativeness by computing \emph{how many candidate programs from $\mathscr{G}$ can be removed immediately with the teacher's response}. As the teacher's response can be arbitrary, our evaluation considers all possible responses and take the minimum number among all cases.
The formulation is shown in Fig~\ref{fig:estimate} and explained below.

\begin{figure}
\small
\begin{align*}
\Quality \big(\Compare(c_1, c_2)\big) & ~\defeq~ \min \Big(
\RemNEQ{c_1}{c_2},~  \RemNEQQ{c_2}{c_1},~
 \RemEQ{c_1}{c_2}
\Big) \\
\Quality \big(\Validate(c)\big) & ~\defeq~ \min \Big(
\RemNewBest{c},~
 \RemOldBest{c}
\Big)
\end{align*}
\begin{tabular}{rlrl}
$\big| \varphi \big|$ & $\!\!\!\defeq~ \Big| \big\{ x \mid \forall O \in \textrm{dom}(\mathscr{G}): \mathscr{G}(O) = x ~\Rightarrow~ \varphi \big\} \Big|$
&
$\RemNewBest{c}$ & $\!\!\!\defeq~ r\_best <_O c ~\lor~ \forall y \in \textrm{image}(\mathscr{G}). y \lesssim_O c$


\end{tabular}
\vspace{-.1in}
\caption{Informativeness of queries (with $\mathscr{G}$ the current PCS, and $r\_best$ the running best program).}
\label{fig:estimate}
\end{figure}

$\bullet~ \Compare$ {\bf query:} For $\Compare(c_1, c_2)$, recall the teacher may prefer $\mathcal{P}[c_1]$ to $\mathcal{P}[c_2]$, or vice versa, or consider the two programs equally good (corresponding to the three responses: $<$, $>$ and $=$). In each case, we can remove all the candidate target functions that have a different relative ordering
of $\mathcal{P}[c_1]$ and $\mathcal{P}[c_2]$ than the teacher's preference. We denote the number of candidates
that can be removed when the teacher prefers $\mathcal{P}[c_1]$ (resp. $\mathcal{P}[c_2]$) as $\RemNEQ{c_1}{c_2}$ (resp.  $\RemNEQQ{c_2}{c_1}$). Further let $\RemEQ{c_2}{c_1}$ denote the candidates that can be 
removed if the user indicates both programs are equally good. The overall informativeness is just the minimum of the 
three cases.


$\bullet~ \Validate$ {\bf query:} For $\Validate(c)$, recall the teacher may confirm that $\mathcal{P}[c]$ is indeed better than the running best $\mathcal{P}[r\_best]$ (response $\top$), or keep the current running best (response $\bot$). 
Like the compare query, in each case, we can eliminate all candidate target functions that do not satisfy this relative
preference between $\mathcal{P}[c]$ and $\mathcal{P}[r\_best]$ expressed by the user. However, in the case that the user
prefers $\mathcal{P}[c]$, we can additionally remove all candidate target functions for which $\mathcal{P}[c]$ is already the best choice, i.e., more queries are not needed for further improvement --- we use $\RemNewBest{c}$ in Fig~\ref{fig:estimate} to denote the total number of eliminated candidate programs in this case.
The overall informativeness is just the minimum of the two cases.


\begin{table*}
\begin{minipage}[t]{0.3\textwidth}
\centering
\caption{Example PCS $\mathscr{G}_{ex}$.
}
\vspace{-.1in}
\resizebox{\textwidth}{!}{
\begin{tabular}{|c|c|c|}
    \hline
    Trgt & Optimal & Ranking of all \\
    func & solution & candidate programs \\
    $O$ & $\mathscr{G}_{ex}(O)$ & under $O$ \\
    \toprule
    $O_0$ & $\boldsymbol{P_2}$ & $P_0 < r\_best < P_1 < \boldsymbol{P_2}$  \\
    \hline
    $O_1$ & $\boldsymbol{P_2}$ & $P_1 < P_0 < r\_best < \boldsymbol{P_2}$ \\
    \hline
    $O_2$ & $\boldsymbol{P_0}$ & $P_1 = P_2 < r\_best < \boldsymbol{P_0}$ \\
    \hline
    $O_3$ & $\boldsymbol{P_1}$ & $r\_best < P_0 < P_2 < \boldsymbol{P_1}$ \\
    \hline
    $O_4$ & $\boldsymbol{P_2}$ & $r\_best < P_0 < P_1 < \boldsymbol{P_2}$ \\
    \hline
    \end{tabular}
    }
    \label{tab:ranking}
\end{minipage}
\hfill%
\begin{minipage}[t]{0.38\textwidth}
\caption{Informativeness of queries \Compare($c_1, c_2$).}
\vspace{-.1in}
\resizebox{\textwidth}{!}{
\begin{tabular}{|c|c|r|r|r|r|}
    \hline
    $c_1$ & $c_2$ & $\RemNEQ{c_1}{c_2}$ & $\RemNEQQ{c_2}{c_1}$ & $\RemEQ{c_1}{c_2}$ & \Quality \\
    \toprule
    $P_0$ & $P_1$ & 1 & 1 & 3 & 1 \\
    \hline
    $P_1$ & $P_2$ & 2 & 2 & 2 & 2 \\
    \hline
    $P_0$ & $P_2$ & 2 & 1 & 3 & 1 \\
    \hline
    $P_0$ & $r\_best$ & 0 & 2 & 3 & 0 \\
    \hline
    $P_1$ & $r\_best$ & 1 & 1 & 3 & 1 \\
    \hline
    $P_2$ & $r\_best$ & 1 & 2 & 3 & 1 \\
    \hline
    \end{tabular}
    }
    \label{tab:compare-query-example}
\end{minipage}
\hfill%
\begin{minipage}[t]{0.3\textwidth}
\centering
\caption{Informativeness of queries \Validate($c$).}
\vspace{-.1in}
\resizebox{\textwidth}{!}{
\begin{tabular}{|c|r|r|r|}
    \hline
    $c$ & $\RemNewBest{c}$ & $\RemOldBest{c}$ & \Quality \\
    \toprule
    $P_0$ & 1 & 2 & 1 \\
    \hline
    $P_1$ & 2 & 1 & 1 \\
    \hline
    $P_2$ & 2 & 2 & 2 \\
    \hline
    \end{tabular}
    }
    \label{tab:propose-query-example}
\end{minipage}
\end{table*}

\begin{example}
Table~\ref{tab:ranking} shows a PCS $\mathscr{G}_{ex}$ that consists of 5 candidate target functions, namely $O_i$ for $0 \leq i \leq 4$, and 3 candidate programs, namely $P_0, P_1$ and $P_2$. The rankings of all candidate programs and the current running best $r\_best$ under target functions are also shown in Table~\ref{tab:ranking}. The informativeness of \Compare~and \Validate~queries for PCS $\mathscr{G}_{ex}$ is presented in Tables~\ref{tab:compare-query-example} and~\ref{tab:propose-query-example}, respectively. Take \Compare($P_1$, $P_2$) as an example, $\RemNEQ{P_1}{P_2}$ is 2 since all candidate target function except $O_3$ believes $P_1 \lesssim_O P_2$ and removing these four target functions essentially removes $P_0$ and $P_2$ from PCS $\mathscr{G}_{ex}$. As per Fig~\ref{fig:estimate}, $\RemNEQQ{P_2}{P_1}$ and $\RemEQ{P_1}{P_2}$ are also 2.
The informativeness of \Compare($P_1$, $P_2$) is therefore 2, the minimum of the three cases above, which means at least 2 candidate programs will be removed from $\mathscr{G}_{ex}$ no matter which the user prefers. 
Consider \Validate($P_2$) as another example, $\RemNewBest{P_2}$ is 2, since $O_2$ prefers $r\_best$ over $P_2$ and $P_2$ is the best choice under $O_0$, $O_1$ and $O_4$. Candidate program $P_1$ and $P_2$ will be removed from $\mathscr{G}_{ex}$ as their target functions either do not satisfy user preference or can not improve the running best further. 
Given PCS $\mathscr{G}_{ex}$, both \Validate($P_2$) and \Compare($P_1$, $P_2$) share highest informativeness, which is 2. In this case, \Validate($P_2$) will be presented to the user.
\end{example}


\subsection{The Algorithm}

Our learning algorithm is almost straightforward: for each iteration, compute the informativeness of every possible query and make the most informative query. The remaining issue is that it is not realistic to keep a PCS that contains all possible candidates, because the number of candidates is usually very large, if not infinite. For example, $\mathscr{L}_{MCF}$ in Example~\ref{ex:mcf-qual} has infinitely many Pareto optimal solutions, ranging in the continuous spectrum from maximizing throughput to minimizing latency. To this end, our voting-guided algorithm maintains a moderate-sized PCS, from which queries are generated and selected based on their informativeness.

Algorithm~\ref{alg:voting} illustrates the voting-guided algorithm. The algorithm takes as input a comparative synthesis problem $\mathscr{C}$ and a target function space $\mathcal{O}$, and maintains a PCS $\mathscr{G}$ w.r.t. $\mathscr{C}$ and $\mathcal{O}$ and set of preferences $R$, both empty initially. 
In each iteration, the algorithm computes the informativeness of all possible queries that can be made about the current candidates $\textrm{image}(\mathscr{G})$, and picks the highest-informativeness query according to the computation presented in Fig~\ref{fig:estimate} (line~\ref{line:quality}). 
After the query is made and the response is received, an \compareupdate~subroutine is invoked to update $\mathscr{G}$ and remove all candidates violating the preference (lines~\ref{line:prune-begin}--\ref{line:prune-end}).  
Moreover, the algorithm also checks at the beginning of every iteration the size of $\mathscr{G}$; if $\textrm{image}(\mathscr{G})$ is below a fixed threshold {\sc Thresh}, the algorithm attempts to extend $\mathscr{G}$ using a \generatemore~subroutine. The algorithm terminates and returns the current running best when $\mathscr{G}$ becomes 0 or {\sc NQuery} queries have been made, where {\sc NQuery} is the number of queries that the teacher promises to answer (line~\ref{line:loopend}). Table~\ref{tab:example-run} shows an example run of this algorithm.

\begin{algorithm}[t]
\SetKwComment{Comment}{/* }{ */}
\SetKwInOut{Input}{input}
\SetKwInOut{Output}{output}
\SetKwProg{Fn}{def}{\string:}{}
\SetKwFunction{Conc}{\color{blue} \bf concolic-synth}
\SetKwFunction{Grad}{\color{blue} \bf cooperative-synth}
\SetKwFunction{Fixed}{\color{blue} \bf fixed-height}
\SetKwFunction{Bounded}{\color{blue} \bf bounded-height-enumerate}
\SetKwFunction{BuildTree}{buildGraph}
\SetKwFunction{CheckSat}{\color{blue} \bf verify}
\SetKwFunction{Synth}{\color{blue} \bf ind-synth}
\SetKwFunction{Tree}{\color{blue} \bf checkTreeness}
\SetKwFunction{ValidSk}{ \sc ValidSkeleton}
\SetKwFunction{CandSk}{ \sc CandSk}
\SetKwFunction{ObviousSol}{ \sc ObviousSolution}
\SetKwFunction{Simplify}{\sc SubproblemsB}
\SetKwFunction{Deduct}{\color{blue} \bf deduct}
\SetKwFunction{Enumerate}{\color{blue} \bf enum-n-val}
\SetKwFunction{Objectivefirst}{\color{blue} \bf objective-first}
\SetKwFunction{Objectiveaided}{\color{blue} \bf objective-aided}
\SetKwFunction{Objectivevoting}{\color{blue} \bf voting-guided}
\SetKwFunction{Learn}{\color{blue} \bf voting-guided-learn}
\SetKwFunction{GenProg}{{\sc SynProg}}
\SetKwFunction{Generate}{\color{blue} \bf generate-more}
\SetKwFunction{Generateobj}{\color{blue} \bf gen-objective}
\SetKwFunction{Generatebetter}{\color{blue} \bf gen-better-program}
\SetKwFunction{ValidateCompare}{\color{blue} \bf validate-then-compare}
\SetKwFunction{QueryType}{{\sc BestQuery}}
\SetKwFunction{MakeQuery}{{\sc MakeQuery}}
\SetKwFunction{BestValQuery}{{\sc BestValidateQuery}}
\SetKwFunction{RemoveCommonWorst}{\sc RemoveCommonWorst}
\SetKwFunction{Needupdateobj}{\color{blue} \bf need-update-objective}
\SetKwFunction{Needobj}{\color{blue} \bf need-more-objectives}
\SetKwFunction{Needprog}{\color{blue} \bf need-more-programs}
\SetKwFunction{Needcand}{\color{blue} \bf need-more-candidates}
\SetKwFunction{CompareUpdate}{\color{blue} \bf update}
\SetKwFunction{Cegis}{\color{blue} \bf cegis}
\SetKwFunction{Simple}{ \sc Simple}
\SetKwFunction{Solved}{\sc Solved}
\SetKwFunction{Unsolved}{\sc Unchecked}
\SetKwFunction{AddChild}{\sc AddChild}
\SetKwFunction{CreateNode}{\sc CreateNode}
\SetKwFunction{Children}{{\sc Children}}
\SetKwFunction{HighestLeaf}{{\sc HighestUncheckedLeaf}}
\SetKwFunction{WeakerSpec}{{\sc WeakerSpecs}}
\SetKwFunction{FixedTerm}{{\sc FixedTerms}}
\SetKwFunction{Sub}{\sc SubproblemsA}
\SetKwFunction{Dequeue}{dequeue}
\SetKwFunction{Enqueue}{enqueue}
\SetKwFunction{EmptyQueue}{emptyQueue()}
\SetKwFunction{EmptyPriorityQueue}{emptyPriorityQueue()}
\SetKwFunction{root}{source}
\SetKwFunction{Empty}{empty}
\SetKwFunction{spec}{spec}
\SetKwFunction{target}{target}
\SetKwFunction{grammar}{grammar}
\SetKwFunction{parent}{pred}
\SetKwFunction{children}{succ}
\SetKwFunction{solution}{solution}
\SetKwFunction{IsSolution}{\sc IsSolution}
\SetKwFunction{Simplify}{{\sc Simplify}}
\SetKwFor{For}{for}{:}{}%
\SetKwFor{ForEach}{foreach}{:}{}
\SetKwIF{If}{ElseIf}{Else}{if}{:}{elif}{else:}{}%
\SetKwFor{While}{while}{:}{fintq}%
\SetKw{Assert}{assert}
\SetKw{Break}{break}
\SetKw{Continue}{continue}
\SetKw{True}{true}
\SetKw{False}{false}
\AlgoDontDisplayBlockMarkers
\SetAlgoNoEnd
\LinesNumbered
\Input{A comparative synthesis problem $\mathscr{C} = (\mathcal{P}, C, \Phi, M, \mathcal{T})$ and a target function space $\mathcal{O}$ with respect to $M$}
\Output{A quasi-optimal solution to $\mathscr{C}$}
\Fn{\Learn{$\mathcal{P}, C, \Phi, M, \mathcal{T}, \mathcal{O}$}}{
    $R \leftarrow \top$, $count \leftarrow 0$, $\mathscr{G} \leftarrow \emptyset$ \tcp{collected preferences, query count and the PCS}
    $r\_best \leftarrow$ \GenProg{$\mathcal{P}, C, \phi_{P}$} \tcp{get the first solution and initialize the running best}
    \Repeat{$\mathscr{G} = \emptyset \lor count \geq$ {\sc NQuery} \label{line:loopend}}{
        \If{$|{\textrm{image}}(\mathscr{G})| <$ {\sc Thresh}}{
            \Generate($\mathcal{P}$, $M$, $R$, $r\_best$, $\mathscr{G}$) \tcp{generate more candidates}
        }
        \Comment{pick and make the most informative query}
        $q\_type, c_1, c_2 \leftarrow$ \QueryType($\mathscr{G}$)
        \label{line:quality} \\
        $response \leftarrow$ \MakeQuery($q\_type$, $c_1$, $c_2$) \\
        \If{$q\_type = \Validate$ \tcp*[h]{if \Validate{($c_1$)}}}{
            \CompareUpdate($M(\mathcal{P}[c_1])$, $M(\mathcal{P}[r\_best])$, $response$) \tcp{update $R$ and $\mathscr{G}$}
            \If{$response = (\top)$  \label{line:prune-begin}}{
                \tcp{if $\mathcal{P}[c_1]$ is better than running best}
                $\mathscr{G} \leftarrow \mathscr{G}\mid_{\{O \mid \mathscr{G}(O) >_O \mathcal{P}[c_1] \}}$ \\
                $r\_best \leftarrow c_1$
            }
        }
        \ElseIf{$q\_type = \Compare$ \tcp*[h]{if \Compare{($c_1, c_2$)}}}{
            \CompareUpdate($M(\mathcal{P}[c_1])$, $M(\mathcal{P}[c_2])$, $response$) \tcp{update $R$ and $\mathscr{G}$} \label{line:prune-end}
        }
        $count++$ \\
    }
    \Return{$r\_best$} \label{line:voting-fail}
}
\caption{The voting-guided learning algorithm.}
\label{alg:voting}
\end{algorithm}

The subroutines involved in the algorithm are shown as Algorithm~\ref{alg:subroutines}. The \compareupdate~subroutine is straightforward, taking a new preference pair and shrinking $\mathscr{G}$ accordingly.
The \generatemore~subroutine is tasked to expand $\mathscr{G}$ as much as possible within a time limit. Each time, it tries to find a pair $(O, c)$ such that $O$ satisfies all existing preferences and prefers $\mathcal{P}[c]$ to $\mathcal{P}[r\_best]$, and $\mathcal{P}[c]$ is effectively optimal under $O$. Note that this subroutine delegates several heavy-lifting tasks to off-the-shelf, domain-specific procedures: {\sc SynProg} for qualitative synthesis, {\sc SynObj} for objective synthesis, and {\sc Improve} for optimization under a known objective.
For example, the qualitative synthesis and objective synthesis problem of our running example can be encoded to a logical query and discharged by any SMT solvers, such as Z3~\cite{Z3}. The optimization problem under known objectives can also be solved by a linear programming solver, such as Gurobi~\cite{gurobi}.

\begin{algorithm}[t]
\SetKwInOut{Input}{input}
\SetKwInOut{Output}{output}
\SetKwInOut{Modifies}{modifies}
\SetKwProg{Fn}{def}{\string:}{}
\SetKwFunction{CheckSat}{\color{blue} \bf verify}
\SetKwFunction{Synth}{\color{blue} \bf ind-synth}
\SetKwFunction{Simplify}{\sc SubproblemsB}
\SetKwFunction{Deduct}{\color{blue} \bf deduct}
\SetKwFunction{Enumerate}{\color{blue} \bf enum-n-val}
\SetKwFunction{Objectivefirst}{\color{blue} \bf objective-first}
\SetKwFunction{Objectiveaided}{\color{blue} \bf objective-aided}
\SetKwFunction{ValidateCompare}{\color{blue} \bf validate-then-compare}
\SetKwFunction{CompareUpdate}{\color{blue} \bf update}
\SetKwFunction{Generate}{{\color{blue} \bf generate-more}}
\SetKwFunction{GenObj}{{\sc SynObj}}
\SetKwFunction{GenBetterProg}{{\sc Improve}}
\SetKwFunction{GenProg}{{\sc SynProg}}
\SetKwFunction{Generatek}{\color{blue} \bf gen-k}
\SetKwFunction{Cegis}{\color{blue} \bf cegis}
\SetKwFunction{Simple}{ \sc Simple}
\SetKwFunction{Solved}{\sc Solved}
\SetKwFunction{Unsolved}{\sc Unchecked}
\SetKwIF{If}{ElseIf}{Else}{if}{:}{elif}{else:}{}%
\SetKw{Assert}{assert}
\SetKw{Break}{break}
\SetKw{Continue}{continue}
\SetKw{True}{true}
\SetKw{False}{false}
\AlgoDontDisplayBlockMarkers
\SetAlgoNoEnd
\SetAlgoNoLine%
\Input{Two program metric vectors $m, n$ and their comparison result $response$}
\Modifies{The current metric vector preferences $R$ and the current PCS $\mathscr{G}$}
\Fn{\CompareUpdate($m$, $n$, $response$)}{
     \If{$response = (>)$}{
        $R \leftarrow R \land m > n$
     }
     \ElseIf{$response = (<)$}{
        $R \leftarrow R \land n > m$
     }
     \Else{
        $R \leftarrow R \land m = n$
     }
     $\mathscr{G} \leftarrow \mathscr{G}\mid_{\{O \mid O \models R\}}$ \\
     \Return \\
}
\Input{A parameterized program $\mathcal{P}$, a metric group $M$, current metric vector preferences $R$ and current running best $r\_best$}
\Modifies{The Pareto candidate set $\mathscr{G}$}
\Fn{\Generate($\mathcal{P}$, $M$, $R$, $r\_best$, $\mathscr{G}$)}{
    \Repeat{timeout}{
        $c \leftarrow $ \GenProg{$\mathcal{P}$, $M$} \tcp*[l]{synthesize an arbitrary (Pareto optimal) program}
        $O \leftarrow$ \GenObj{$R \land M(\mathcal{P}[c]) > M(\mathcal{P}[r\_best])$} \tcp{synthesize an objective that prefers the new $c$ over $r\_best$}
        \If{$O \neq \bot$}{
            $c \leftarrow$ \GenBetterProg{$O$, $\mathcal{P}$, $M$, $c$} \tcp*[h]{this is optional: try to improve $\mathcal{P}[c]$}
        }
        \Else{
            $O \leftarrow$ \GenObj{$R$} \tcp{synthesize an arbitrary objective satisfying $R$}
            $c \leftarrow$ \GenBetterProg{$O$, $\mathcal{P}$, $M$, $r\_best$}
            \tcp*[h]{synthesize a best possible program under $O$, but at least better than $r\_best$}
        }
        $\mathscr{G} \leftarrow \mathscr{G} \uplus (O, c)$
    }
}
\caption{The subroutines involved in the voting-guided learning algorithm.}
\label{alg:subroutines}
\end{algorithm}

\subsection{Convergence}

In the rest of the section, we discuss the convergence of the algorithm. Recall that our algorithm only produces quasi-optimal programs as the ground-truth target function is not present. Therefore, the algorithm should be evaluated on the rate of convergence~\cite{convergence}, i.e., how fast the median quality~\footnote{The algorithm involves random sampling and results we prove below are for the median quality of output solutions; the proofs can be easily adapted to get similar results for the mean quality of solutions.} of solutions (see Def~\ref{def:quality}) approaches 1 as more queries are made. Our first result is that the algorithm guarantees a logarithmic rate of convergence.

\begin{theorem}
\label{thm:bound}
Given a comparative synthesis problem $\mathscr{C}$ and a target function space $\mathcal{O}$ as input, if Algorithm~\ref{alg:voting} terminates after $n$ queries, the median quality of the output solutions is at least $\displaystyle 2^{\frac{-1}{n+1}}$.
\end{theorem}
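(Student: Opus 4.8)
The plan is to reduce the bound to the classical order-statistics fact that the maximum of $n+1$ independent uniform samples has median $2^{-1/(n+1)}$, and to couple the running best with such a maximum via the probability integral transform. First I would observe that the quality function $\textrm{Quality}_{\mathscr{C}}(c) = P(c \geq_{\mathcal{T}} X_{\mathscr{C}}^{M})$ is exactly the cumulative distribution function of the teacher's total preorder $\geq_{\mathcal{T}}$ on $\textrm{Solutions}(\mathscr{C}^{\textrm{qual}})$. Consequently, if a candidate $c$ is drawn uniformly at random from the solution set, then $\textrm{Quality}_{\mathscr{C}}(c)$ stochastically dominates the uniform law on $[0,1]$; that is, $P\big(\textrm{Quality}_{\mathscr{C}}(c) \leq t\big) \leq t$ for every $t \in [0,1]$, with equality when the preorder is atomless. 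This is the standard CDF inequality, and crucially it holds verbatim in the presence of ties, which matters because $\geq_{\mathcal{T}}$ may identify distinct solutions.

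Second, I would track the quality of the running best as a stochastic process over the $n$ iterations. The initialization produces one solution, and each of the $n$ queries is accompanied by at least one freshly generated candidate, so the run exposes $n+1$ sampled solutions $c_0, c_1, \dots, c_n$. Writing $B_k$ for $\textrm{Quality}_{\mathscr{C}}$ of the running best after the $k$-th query and fixing independent $U_0, \dots, U_n$ drawn uniformly on $[0,1]$, I would prove by induction that $B_k$ stochastically dominates $\max(U_0, \dots, U_k)$. The base case is the transform observation applied to $c_0$; the inductive step uses that the running best is never decreased and that the newly generated candidate, conditioned on the history, can be coupled to a fresh independent draw whose quality dominates $U_{k+1}$ (it dominates a uniform over all solutions, and a fortiori when restricted to the solutions that improve on the current best).

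Third, I would finish with the order-statistics computation: $\max(U_0, \dots, U_n)$ has distribution function $t^{n+1}$ on $[0,1]$, so its median $m$ solves $m^{n+1} = 1/2$, i.e. $m = 2^{-1/(n+1)}$. Since $B_n$ stochastically dominates $\max(U_0, \dots, U_n)$, we obtain $P\big(B_n \geq 2^{-1/(n+1)}\big) \geq P\big(\max(U_0, \dots, U_n) \geq 2^{-1/(n+1)}\big) = 1/2$, which is exactly the assertion that the median quality of the output is at least $2^{-1/(n+1)}$.

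The hard part will be the inductive step handling both query types at once. An accepted \Validate query literally performs the ``keep the better of the old best and a fresh sample'' operation, so it fits the recursion directly; the difficulty is \Compare queries, which generate candidates but never overwrite the running best. For the count to yield $n+1$ samples rather than $(\text{number of \Validate queries})+1$ --- the latter giving only the strictly weaker bound $2^{-1/(v+1)} \leq 2^{-1/(n+1)}$ --- I must argue that each \Compare query still contributes an independent uniform-dominating draw to the quantity the output eventually matches. I expect to establish this through the termination condition $\mathscr{G} = \emptyset$, showing that the process cannot stop until the running best is no worse than every candidate that entered the Pareto candidate set, so the comparison-generated candidates are effectively folded into the maximum. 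The remaining delicate point is justifying the conditional independence and uniformity of the generated candidates under the adaptive, objective-guided generation; here I would appeal to the generator sampling objective functions and their optima independently of the teacher's hidden ranking $\geq_{\mathcal{T}}$, so that the induced quality of each fresh candidate is a conditionally uniform-dominating draw.
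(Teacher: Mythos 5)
Your proposal is correct and takes essentially the same route as the paper's proof: the paper likewise argues that every query, regardless of type, discards at least one candidate from the PCS, so the final output is optimal among at least $n+1$ uniformly sampled candidates, and then reads off the median of the maximum of $n+1$ uniforms, i.e.\ the $\mathrm{Beta}(n+1,1)$ median $2^{-1/(n+1)}$. Your probability-integral-transform setup, stochastic-dominance induction, and termination-condition treatment of \Compare{} queries are simply a more explicit (and more careful, e.g.\ regarding ties and adaptive sampling) rendering of the one-line counting claim that the paper asserts directly.
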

\begin{proof}
Note that every query will discard at least one candidate program from the PCS, regardless of the query type. In other words, the final output $c$ must be the optimal among at least $(n+1)$ randomly selected candidates from the uniform distribution. Therefore, the quality of $c$ is at least the $(n+1)$-th order statistic of the uniform distribution, which is a beta distribution $\textrm{Beta}(n+1, 1)$, whose median is $\displaystyle 2^{\frac{-1}{n+1}}$.
\end{proof}


The proved lower-bound in the theorem above is tight only when each query only removes one candidate from the PCS $\mathscr{G}$. Unfortunately, the following lemma shows that in general, this scenario is always realizable:
\begin{theorem}
\label{thm:tight}
The bound in Theorem~\ref{thm:bound} is tight.
\end{theorem}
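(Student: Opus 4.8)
The plan is to prove tightness by exhibiting, for each $n$, a comparative synthesis instance on which the ``discards at least one candidate'' estimate used in the proof of Theorem~\ref{thm:bound} holds with equality: every query issued by Algorithm~\ref{alg:voting} removes \emph{exactly} one candidate from $\mathscr{G}$, the candidates that are ever examined are independent and uniform in their true quality, and the returned $r\_best$ is their $\lesssim_{\mathcal{T}}$-maximum. On such an instance the output is precisely the best of $n+1$ i.i.d. uniform draws, so its quality (Def~\ref{def:quality}) is distributed as the maximum of $n+1$ uniforms, i.e. as $\mathrm{Beta}(n+1,1)$, whose CDF is $x^{n+1}$ and whose median is $2^{-1/(n+1)}$. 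This coincides with the lower bound, so it cannot be improved.

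For the construction I would keep the solution space a continuum and attach to each solution $x$ a hidden coordinate $r(x)\in(0,1)$ that is uniform under uniform sampling; the teacher's hidden ranking $\lesssim_{\mathcal{T}}$ is simply the order induced by $r$, so sampled candidates have i.i.d. uniform true qualities. The adversarial ingredient is the objective space: I take $\mathcal{O}=\{O_c\}$ to consist of one ``self-promoting'' (deliberately non-monotone) objective per solution, $O_c(x)=r(x)+\mathbf{1}[x=c]$, whose unique optimum is $c$. Crucially, $\lesssim_{\mathcal{T}}$ itself is realized by no $O\in\mathcal{O}$ that contributes a PCS entry (the identity-in-$r$ objective has no attained maximum over the open continuum), so in any PCS the \emph{only} ``defender'' of a candidate $c$ is its own objective $O_c$, while every other $O_{c'}$ orders $c$ against the remaining candidates in agreement with the truth.

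I would then verify, via the informativeness formulas of Fig~\ref{fig:estimate}, that on this instance every available query has informativeness exactly $1$ and that the perfect teacher's truthful response also removes exactly one candidate. For a $\Compare(c_1,c_2)$ query with, say, $r(c_2)>r(c_1)$, the set $\{O:c_2\lesssim_O c_1\}$ contains only $O_{c_1}$ (since $O_{c_1}(c_1)=r(c_1)+1>r(c_2)=O_{c_1}(c_2)$, whereas every other surviving objective ranks $c_2>c_1$), giving $\#\mathrm{RemNEQ}(c_2,c_1)=1$ while the other two branches are strictly larger; hence the minimum is $1$ and the truthful response retires exactly the loser $c_1$. An analogous computation shows each $\Validate(c)$ has $\#\mathrm{RemNewBest}(c)$ and $\#\mathrm{RemOldBest}(c)$ both at least $1$ with the attained value $1$, so the proposed candidate is the single one eliminated whether it is accepted or rejected. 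Consequently the main loop runs as a king-of-the-hill / single-elimination process in which $r\_best$ tracks the running maximum, each of the $n$ queries retires one of the $n+1$ examined candidates, and the survivor is their $\lesssim_{\mathcal{T}}$-maximum.

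The main obstacle is making this bookkeeping watertight \emph{simultaneously} across both query types and over the whole run, rather than for a single step. In particular I must argue that \generatemore{} only ever introduces a fresh, independent uniform candidate (never re-certifying that a survivor dominates unexamined solutions, which would leak extra information and shrink $\mathscr{G}$ by more than one), that the non-realizability of $\lesssim_{\mathcal{T}}$ in $\mathcal{O}$ forces the PCS to contract by exactly one per response instead of collapsing, and that the returned $r\_best$ genuinely equals the maximum of the examined samples. Tuning {\sc Thresh} and {\sc NQuery} so the loop performs precisely $n$ single-candidate eliminations and terminates with $\mathscr{G}=\emptyset$, and confirming the examined candidates are independent uniform draws so the order-statistic computation is exact, then yields a median output quality of exactly $2^{-1/(n+1)}$ and completes the proof.
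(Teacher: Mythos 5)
Your core construction is the same one the paper uses: a family of ``self-promoting'' objectives, one per candidate, so that each candidate in the PCS is defended only by its own objective, and hence every \Compare{} and \Validate{} query has informativeness exactly $1$ under Fig~\ref{fig:estimate}. The paper's proof (via Lemma~\ref{thm:worst}) picks $n$ arbitrary solutions $c_1,\dots,c_n$, builds for each $i$ a total order placing $c_i$ on top, and then --- this is the step you handle differently --- invokes the richness condition of Definition~\ref{def:objective-space} to obtain objectives $O_1,\dots,O_n$ realizing those orders \emph{inside an arbitrary objective function space} $\mathcal{O}$; the adversarial object is the finite PCS $\mathscr{G}$ with $\mathscr{G}(O_i)=c_i$, not the space $\mathcal{O}$ itself.

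That difference is where your proposal has a genuine gap. Your $\mathcal{O}=\{O_c\}$ with $O_c(x)=r(x)+\mathbf{1}[x=c]$ is deliberately impoverished: no member of it can, for instance, reverse the $r$-order on three points, so it fails the defining property of an objective function space (Definition~\ref{def:objective-space}), which is a hypothesis of Theorem~\ref{thm:bound} and part of the input contract of Algorithm~\ref{alg:voting}. An instance built on it lies outside the class of instances the theorem quantifies over, so it cannot witness tightness of the theorem as stated. (There is also a smaller type issue: objectives must be computable functions of the \emph{metric vector}, so $r(\cdot)$ and the identity test $\mathbf{1}[x=c]$ would have to factor through $M$.) Nor can you simply drop the restriction on $\mathcal{O}$: you impose it precisely so that \generatemore{} can only ever insert self-promoting objectives, and that whole-run bookkeeping --- fresh i.i.d.\ uniform candidates, exactly-one removal per response, $r\_best$ equal to the running maximum, tuning of {\sc Thresh} and {\sc NQuery} --- is exactly the part you flag as an obstacle and never carry out. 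The paper sidesteps both problems by proving a weaker but sufficient claim: for \emph{any} objective function space there is a PCS of size $n$ on which every available query removes exactly one candidate, so the ``each query discards at least one candidate'' step driving the $\textrm{Beta}(n+1,1)$ bound in Theorem~\ref{thm:bound} cannot be improved. To salvage your more ambitious distributional statement (output exactly the maximum of $n+1$ i.i.d.\ uniforms), you would either have to prove the run-level invariants you list, or retreat to the paper's PCS-level formulation.
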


The PCS constructed for the following lemma serves as a witness of the bound tightness:
\begin{lemma}
\label{thm:worst}
Let $\mathscr{S} = (\mathcal{P}, C, \Phi)$ be a qualitative synthesis problem with infinitely many solutions and $\mathcal{O}$ be a target function space. For any integer $n > 0$, there exist a PCS $\mathscr{G}: \mathcal{O} \nrightarrow C$ and a parameter $r\_best \in C$ such that: 
{\bf (1)} $|\textrm{image}(\mathscr{G})| = n$; 
{\bf (2)} for any $c_1, c_2 \in \textrm{image}(\mathscr{G})$, $\Quality(\Compare(c_1, c_2)) = 1$; 
{\bf (3)} for any $c \in \textrm{image}(\mathscr{G})$, $\Quality(\Validate(c)) = 1$.
\end{lemma}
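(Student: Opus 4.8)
The plan is to prove the lemma by exhibiting an explicit witness: a PCS $\mathscr{G}$ whose image consists of $n$ carefully arranged candidates, together with a ``dummy'' running best $r\_best$ that every objective ranks strictly last, so that whatever response the teacher gives, each query can eliminate only a single candidate. Since the informativeness formulas of Fig~\ref{fig:estimate} depend only on $\textrm{image}(\mathscr{G})$, on how the objective functions in $\textrm{dom}(\mathscr{G})$ rank that image, and on $r\_best$, it suffices to design this data combinatorially and then read off the counts.

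Concretely, I would fix $n+1$ distinct solutions $c_0, c_1, \dots, c_n$ of $\mathscr{S}$ (available since $\mathscr{S}$ has infinitely many solutions), let $d$ be the dimension of the metric group underlying $\mathcal{O}$, and choose distinct integer vectors $v_0, v_1, \dots, v_n \in \mathbb{Z}^{d}$. I then pick a metric group $M$ with $M(c_k) = v_k$ for $0 \le k \le n$ and $M(c) = v_0$ for every other solution $c$. For each $k \in \{1, \dots, n\}$ I specify a strict total order on $\{v_0, \dots, v_n\}$ that places $v_k$ on top, $v_0$ at the bottom, and the remaining vectors in increasing index order; invoking the defining property of the objective function space (Def~\ref{def:objective-space}) on the finite set $S = \{v_0, \dots, v_n\}$ yields an $O_k \in \mathcal{O}$ realizing this order on $S$. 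Setting $\mathscr{G} = \{\, O_k \mapsto c_k \mid 1 \le k \le n \,\}$ and $r\_best = c_0$ makes each $c_k$ the unique global optimizer of $O_k$ over $C$ (every other solution sits at the dominated $v_0$), so $\mathscr{G}$ is a valid PCS with $|\textrm{image}(\mathscr{G})| = n$.

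The verification then reduces to two short counting arguments. Fix a pair $i < j$. The order chosen for $O_k$ prefers $c_i$ when $k = i$, prefers the lower-indexed $c_i$ when $k \notin \{i,j\}$, and prefers $c_j$ only when $k = j$; hence $\{\, O_k \mid c_i \lesssim_{O_k} c_j \,\}$ has image exactly $\{c_j\}$, so $\#\textrm{RemNEQ}(c_i, c_j) = 1$, while the companion terms evaluate to $\#\textrm{RemNEQ}(c_j, c_i) = n-1$ and $\#\textrm{RemEQ}(c_i, c_j) = n$; taking the minimum gives $\Quality(\Compare(c_i, c_j)) = 1$. For a query $\Validate(c_i)$, since $c_0 = r\_best$ is ranked strictly last by every $O_k$ we have $c_i >_{O_k} r\_best$ for all $k$, so $\#\textrm{RemOldBest}(c_i) = n$; meanwhile $c_i \lesssim_{O_k} r\_best$ never holds and $c_i \sim_{O_k} \mathscr{G}(O_k)$ holds only at $k = i$, giving $\#\textrm{RemNewBest}(c_i) = 1$ and hence $\Quality(\Validate(c_i)) = 1$.

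The step deserving the most care is guaranteeing that all these rankings are simultaneously realizable inside the \emph{given} space $\mathcal{O}$ rather than in some auxiliary family, and that each resulting $\mathscr{G}(O_k)$ is genuinely optimal so that $\mathscr{G}$ satisfies the PCS definition; the former is exactly what the objective-function-space axiom delivers on the finite set $S$, and the latter is secured by collapsing every other solution's metric vector to the dominated $v_0$. A final routine check is that the constructed orders are strict, so no two distinct candidates are ever tied under any $O_k$; this is what keeps $\#\textrm{RemEQ}$ and $\#\textrm{RemOldBest}$ equal to $n$ and prevents any minimum from dropping below $1$, confirming that the lower bound underlying Theorem~\ref{thm:bound} is attained with equality.
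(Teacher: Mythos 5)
Your proposal is correct and takes essentially the same route as the paper's proof: choose $n$ solutions, give each one an objective (obtained from Definition~\ref{def:objective-space}) that ranks it on top while ordering the remaining candidates identically (lower index preferred), and let $\mathscr{G}$ map each objective to its top candidate, so that any response to any query eliminates exactly one candidate. Your version is in fact more thorough than the paper's one-paragraph argument --- the paper never exhibits the $r\_best$ witness, does not address why each $\mathscr{G}(O_k)$ is globally optimal (not merely optimal among the $n$ chosen candidates) as Definition~\ref{def:pareto} requires, and leaves the informativeness counts as ``it can be verified,'' whereas you handle all three, at the mild cost of constructing the metric group $M$ yourself (a liberty the lemma statement does not clearly grant, though the paper's own argument implicitly needs a comparable assumption, namely that the chosen solutions have pairwise distinct metric vectors).
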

\begin{proof}
As $\mathscr{C}$ has infinitely many solution, we can pick arbitrary $n$ solutions, say $c_1, \dots, c_n$. For each $1 \leq i \leq n$, one can construct a total order $\lesssim_i$ such that $c_n \lesssim_i \dots c_{i+1} \lesssim_i c_{i-1} \dots c_{1} \lesssim_i c_i$. According to the definition of target function space (Def~\ref{def:objective-space}), there exists a target function $O_i$ that fits $\lesssim_i$. Now we can construct $\mathscr{G}$ such that $\textrm{dom}(\mathscr{G}) = \{O_1, \dots, O_n\}$, and $\mathscr{G}(O_i) = c_i$ for each $i$. It can be verified $\mathscr{G}$ is a Pareto candidate set satisfying the required conditions.
\end{proof}

\subsection{Better Convergence Rate with Sortability}
We have shown that our voting-guided algorithm guarantees a logarithmic rate of convergence in general, but are there scenarios for which the algorithm guarantees faster convergence?
We next show that when the comparative synthesis problem is convex and the target function space is concave with two metrics, our algorithm guarantees a faster, \emph{linear} convergence. 
The conditions are commonly seen in practice --- satisfied by half of optimization scenarios studied in \S\ref{sec:evaluation} --- and intuitively, capture the assumption that there are two competing metrics (e.g., throughput and latency) such that for each metric continued improvement leads to diminishing marginal utility (e.g., increasing throughput from 1Gbps to 2Gbps is more favorable than increasing throughput from 2Gbps to 3Gbps).

The idea of the proof bears a similarity to the convergence guarantee for many algorithms in traditional convex optimization~\cite{convex-optimization}; but the key difference is that the objective is indeterminate for our algorithm. We first introduce a key enabling notion for the proof called \emph{sortability},
which makes sure that the candidates in the PCS can be ordered appropriately such that every target function with corresponding candidate $c$ always prefers its nearer neighbors to farther neighbors.

\begin{definition}[Sortability]
\label{def:sortability}
A PCS $\mathscr{G}$ is sortable if there exists a total order $\prec$ over $\textrm{image}(\mathscr{G})$ such that for any target functions $O, P, Q \in \textrm{dom}(\mathscr{G})$ such that $\mathscr{G}(O) \prec \mathscr{G}(P) \prec \mathscr{G}(Q)$, the following two conditions hold:
$\mathscr{G}(P) >_{O} \mathscr{G}(Q)$, and $\mathscr{G}(P) >_{Q} \mathscr{G}(O)$
A target function space $\mathcal{O}$ is sortable with respect to a comparative synthesis problem $\mathscr{C}$ if any PCS $\mathscr{G}$ w.r.t. $\mathscr{C}$ and $\mathcal{O}$ is sortable.
\end{definition}


The following lemma shows that if a PCS is sortable, one can make a query to cut at least half of the candidates, no matter what the teacher's response is.

\begin{lemma}
\label{lemma:half}
If a Pareto candidate set $\mathscr{G}$ is finite and sortable, then there exists a query whose quality for $\mathscr{G}$ as computed in Fig~\ref{fig:estimate} is $\displaystyle \lfloor \frac{|\textrm{image}(\mathscr{G})|}{2} \rfloor$.
\end{lemma}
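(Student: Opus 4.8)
The plan is to exploit sortability to linearize the candidates and then make a single $\Compare$ query about an adjacent pair straddling the median, showing that no matter how the teacher responds, at least $\lfloor |\textrm{image}(\mathscr{G})|/2 \rfloor$ candidates are discarded. Write $m = |\textrm{image}(\mathscr{G})|$ and let $c_1 \prec c_2 \prec \cdots \prec c_m$ be the elements of $\textrm{image}(\mathscr{G})$ listed in the sortability order $\prec$ from Def~\ref{def:sortability}. For each $c_i$ fix some objective $O_i \in \textrm{dom}(\mathscr{G})$ with $\mathscr{G}(O_i) = c_i$ (one exists since $c_i$ lies in the image). I would set $a = \lfloor m/2 \rfloor$ and analyze the query $\Compare(c_a, c_{a+1})$.

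The first key step is to read off, from the two sortability conditions, how every objective ranks the adjacent pair $c_a, c_{a+1}$. Because $a$ and $a+1$ are consecutive in $\prec$, no candidate's optimum lies strictly between them, so each objective's optimum has index either $\le a$ or $\ge a+1$. For any $O$ with $\mathscr{G}(O) = c_i$ and $i < a$ we have $\mathscr{G}(O) \prec c_a \prec c_{a+1}$, and condition~1 gives $c_a >_O c_{a+1}$ strictly; for $i = a$, optimality of $c_a$ gives $c_a \geq_O c_{a+1}$. Symmetrically, for $i > a+1$ we have $c_a \prec c_{a+1} \prec \mathscr{G}(O)$, and condition~2 gives $c_{a+1} >_O c_a$ strictly; for $i = a+1$, optimality gives $c_{a+1} \geq_O c_a$. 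Thus objectives with optimum index $\le a$ satisfy $c_{a+1} \lesssim_O c_a$, while those with optimum index $\ge a+1$ satisfy $c_a \lesssim_O c_{a+1}$.

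The second step is to count the distinct candidates removed under each of the teacher's three possible responses, using the formulas in Fig~\ref{fig:estimate}. If the teacher prefers $c_a$, the removed set $\mathscr{G}|_{\{O \mid c_a \lesssim_O c_{a+1}\}}$ has image containing $\{c_{a+1},\dots,c_m\}$ (witnessed by $O_{a+1},\dots,O_m$), so $\#\textrm{RemNEQ}(c_a, c_{a+1}) \ge m - a = \lceil m/2\rceil$. If the teacher prefers $c_{a+1}$, the removed image contains $\{c_1,\dots,c_a\}$ (witnessed by $O_1,\dots,O_a$), so $\#\textrm{RemNEQ}(c_{a+1}, c_a) \ge a = \lfloor m/2\rfloor$. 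If the teacher answers ``equal'', every objective whose optimum index is $< a$ or $> a+1$ orders the pair strictly (by the strict inequalities above), hence $c_a \nsim_O c_{a+1}$ for all such $O$; their images cover $\{c_1,\dots,c_{a-1}\}\cup\{c_{a+2},\dots,c_m\}$, giving $\#\textrm{RemEQ}(c_a, c_{a+1}) \ge m - 2$. Taking the minimum, $\Quality(\Compare(c_a, c_{a+1})) \ge \min(\lceil m/2\rceil, \lfloor m/2\rfloor, m-2) = \lfloor m/2\rfloor$ whenever $m \ge 3$ (and $m \le 1$ is vacuous).

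The main obstacle I anticipate is the degenerate equal-response bound for very small $m$, specifically $m = 2$, where the $m-2$ contribution from ``distant'' objectives is vacuous. Here I would invoke the lexicographic tie-breaking convention on $\mathscr{G}$: if $c_1$ and $c_2$ were $\sim_O$-tied for every $O \in \textrm{dom}(\mathscr{G})$, then $\mathscr{G}$ would always select the lexicographically smaller metric vector and the image could not contain both $c_1$ and $c_2$; hence some objective strictly separates them, forcing $\#\textrm{RemEQ}(c_1,c_2) \ge 1 = \lfloor 2/2\rfloor$ and closing the remaining case. The only genuinely delicate bookkeeping is keeping the strict-versus-nonstrict distinctions consistent at the two boundary indices $a$ and $a+1$; everything else is elementary counting once the linear order is in hand.
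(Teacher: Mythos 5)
Your proof is correct and follows essentially the same route as the paper's own: use sortability to linearize $\textrm{image}(\mathscr{G})$, issue $\Compare$ on the adjacent pair straddling the median, and count the candidates removed under each possible response. If anything, yours is more rigorous than the paper's proof, which asserts strict preferences even at the boundary indices (where Pareto optimality only yields non-strict ones), never addresses the $\#\textrm{RemEQ}$ term in the minimum, and skips the $|\textrm{image}(\mathscr{G})|=2$ corner case that you correctly close via the lexicographic tie-breaking convention.
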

\begin{proof}
Let $n = |\textrm{image}(\mathscr{G})|$ and $m = \displaystyle \lfloor \frac{|\textrm{image}(\mathscr{G})|}{2} \rfloor$. As $\mathscr{G}$ is sortable, by Def~\ref{def:sortability}, there exists a total order $\mathscr{G}(O_1) \prec \dots \prec \mathscr{G}(O_{n})$. Now we claim that $\Quality \Big(\Compare \big(\mathscr{G}(O_m), \mathscr{G}(O_{m+1})\big) \Big) = m$.
By Def~\ref{def:sortability}, for any $1 \leq i \leq m$, $\mathscr{G}(O_m) >_{O_i} \mathscr{G}(O_{m+1})$, and for any $m+1 \leq j \leq n$, $\mathscr{G}(O_m) <_{O_j} \mathscr{G}(O_{m+1})$. Then according to the query quality estimation described in Fig~\ref{fig:estimate}, both $\#\textrm{RemNEQ}\big(\mathscr{G}(O_m), \mathscr{G}(O_{m+1})\big)$ and $\#\textrm{RemNEQ}\big(\mathscr{G}(O_{m+1}), \mathscr{G}(O_{m})\big)$ are at least $m$. Therefore, $\Quality \big(\Compare(\mathscr{G}(O_m), \mathscr{G}(O_{m+1}))\big) = m$, which is $\displaystyle \lfloor \frac{|\textrm{image}(\mathscr{G})|}{2} \rfloor$.
\end{proof}

With the lower bound of removed candidates guaranteed by Lemma~\ref{lemma:half}, 
our voting-guided synthesis algorithm guarantees to produce a unique best candidate after a logarithmic amount of queries:
\begin{theorem} \label{thm:bound2}
Given a comparative synthesis problem $\mathscr{C}$ with metric group $M$ and a sortable target function space $\mathcal{O}$ w.r.t. $M$ as input, if Algorithm~\ref{alg:voting} terminates after $n$ queries, the median quality of the output solutions is at least $\displaystyle \big(1 - \frac{1}{\Omega(1.5^{n})}\big)$.
\end{theorem}
\begin{proof}
Note that Algorithm~\ref{alg:voting} generates candidates for the Pareto candidate set $\mathscr{G}$ (through the \generatemore~ subroutine) through random sampling. Therefore, if a query cuts the size of current candidate pool ($\mathscr{G}$ and the running best) by a ratio of $r$, the search space (those candidates satisfying all preferences in $R$) is cut by an equal or higher ratio in that iteration (extra candidates may be discarded by \generatemore, before the query). Now as $\mathscr{G}$ is sortable, by Lemma~\ref{lemma:half}, after the highest-informativeness query, the number of candidates remaining in $\mathscr{G}$ is at most $\lceil \frac{|\textrm{image}(\mathscr{G})|}{2} \rceil$. In other words, the query reduces the size of $\mathscr{G}$ by a ratio of at least $\displaystyle \frac{2}{3}$ (when $|\textrm{image}(\mathscr{G})| = 2$, the total number of candidates including the running best, reduces from 3 to 2), except for the last query. Therefore, the output is the best among $\mathcal{O}(1.5^n)$ randomly selected candidates, which is $\textrm{Beta}(1.5^n, 1)$-distributed. Hence by Def~\ref{def:quality}, the median of the quality of the output is $\displaystyle \frac{1}{2^{\big(\frac{1}{1.5^n}\big)}}$, which is asymptotically equivalent to $\displaystyle \big(1 - \frac{1}{\Omega(1.5^{n})}\big)$.
\end{proof}


\vspace{.1in}

We now formally define the convexity of the comparative synthesis problem and the concavity of the target function space, and build the main convergence result by proving the sortability.

\begin{definition}[Convexity of comparative synthesis problem]
\label{def:convexity}
A comparative synthesis problem $\mathscr{C}$ with metric group $M$ is convex if for any two solutions $c_1, c_2$ to $\mathscr{C}^{\textrm{qual}}$ and any $\alpha \in [0,1]$, a solution $c_3$ to $\mathscr{C}^{\textrm{qual}}$ can be effectively found such that $M(c_3) \gtrsim \alpha \cdot M(c_1) + (1-\alpha) \cdot M(c_2)$.
\end{definition}

\begin{definition}[Concavity of target function space]
\label{def:concavity}
Let $\mathcal{O}$ be a target function space w.r.t. a $d$-dimensional metric group $M$. $\mathcal{O}$ is concave if for any $O \in \mathcal{O}$, for any $v_1, v_2 \in \mathbb{R}^{d}$ and any $\alpha \in [0,1]$, $O(\alpha \cdot v_1 + (1-\alpha) \cdot v_2) \geq \max(O(v_1), O(v_2))$.
\end{definition}

\begin{example}
Our running example falls in this subclass. The comparative synthesis problem $\mathscr{C}_{MCF}$ in Example~\ref{ex:mcf-compare} is convex. As shown in Fig~\ref{fig:mcf}, both \throughput~and \latency~are weighted sum of allocations to every link. Therefore given any two solutions $c_1$ and $c_2$, their convex combination is still feasible, and the metric vector is also the corresponding convex combination of $M(c_1)$ and $M(c_2)$. Moreover, it is not hard to verify that the target function space $\mathcal{O}_{MCF}$ in Example~\ref{ex:mcf-space} is concave, as both the weights of \throughput~and \latency~decrease when their values are good enough and exceed a threshold.
\end{example}


\begin{theorem} \label{thm:sortable}
 Let $\mathscr{C}$ be a convex comparative synthesis problem with a 2-dimensional metric group $M$ and $\mathcal{O}$ be a concave target function space w.r.t. $M$, then $\mathcal{O}$ is sortable w.r.t. $\mathscr{C}^{\textrm{qual}}$.
\end{theorem}
\begin{proof}
We shall show the sortability of any Pareto candidate set $\mathscr{G}$ w.r.t. $\mathscr{C}^{\textrm{qual}}$ and $\mathcal{O}$. We claim that the lexicographic order $\prec_{lex}$ over $\mathbb{R}^2$ (i.e., $(a_1, a_2) \prec_{lex} (b_1, b_2)$ if and only if $a_1 < b_1$ or $a_1 = b_1 \land a_2 < b_2$) witnesses the sortability. 
Per Def~\ref{def:sortability}, for any target functions $O, P, Q \in \textrm{dom}(\mathscr{G})$ such that $\mathscr{G}(O) \prec_{lex} \mathscr{G}(P) \prec_{lex} \mathscr{G}(Q)$, we shall show $\mathscr{G}(P) >_O \mathscr{G}(Q)$ below. It can be similarly proved that $\mathscr{G}(P) >_Q \mathscr{G}(O)$.

Let $M(\mathscr{G}(O)) = (o_1, o_2)$, $M(\mathscr{G}(P)) = (p_1, p_2)$, and $M(\mathscr{G}(Q)) = (q_1, q_2)$. Note that by Def~\ref{def:pareto}, each of $\mathscr{G}(O)$, $\mathscr{G}(P)$ and $\mathscr{G}(Q)$ is optimal under a distinct target function, therefore $M(\mathscr{G}(O))$, $M(\mathscr{G}(P))$, and $M(\mathscr{G}(Q))$ are pairwise incomparable, i.e., $\{o_1, p_1, q_1\}$ and $\{o_2, p_2, q_2\}$ are all distinct values. Due to the lexicographic order $\prec_{lex}$, we have $o_1 < p_1 < q_1$ and $o_2 > p_2 > q_2$. Now by Def~\ref{def:convexity}, one can effectively find a solution $c$ such that 
\[
\small
\begin{split}
M(c) \geq \big(\frac{(q_1 - p_1) \cdot o_1 + (p_1 - o_1) \cdot q_1}{q_1 - o_1}, \frac{(q_1 - p_1) \cdot o_2 + (p_1 - o_1) \cdot q_2}{q_1 - o_1}\big) 
= \big(p_1, \frac{(q_1 - p_1) \cdot o_2 + (p_1 - o_1) \cdot q_2}{q_1 - o_1}\big)
\end{split}
\]
Then by Def~\ref{def:pareto}, $\mathscr{G}(P)$ is at least as good as $c$ and $p_2 \geq \displaystyle\frac{(q_1 - p_1) \cdot o_2 + (p_1 - o_1) \cdot q_2}{q_1 - o_1}$. Finally, by Def~\ref{def:concavity}, we have
$
\displaystyle O((p_1, p_2)) \geq O((p_1, \frac{(q_1 - p_1) \cdot o_2 + (p_1 - o_1) \cdot q_2}{q_1 - o_1})) \geq O((q_1, q_2)) 
$.
In other words, $\mathscr{G}(P) >_O \mathscr{G}(Q)$.
\end{proof}



\section{Evaluation}
\label{sec:evaluation}
We have prototyped the comparative synthesis framework and the voting-guided learning algorithm as \name{} --- an interactive system that produces near-optimal network design by asking 10 questions to the user --- through which we evaluate the effectiveness and efficiency of our approach. We selected four real-world network design scenarios and conducted experiments with both oracles and human users. Our evaluations were conducted
on seven real-world, large-scale internet backbone \textbf{topologies} obtained from~\cite{abilene, b4} (sizes summarized in Table~\ref{tab:topo}). Note that the size of our largest topologies, namely Deltacom and Ion, are already beyond the ones typically considered in the traffic engineering community.


\subsection{Network Optimization Problems}
\label{sec:scenarios}

We summarize the four optimization scenarios in Table~\ref{tab:scenarios}, including their metric groups, target function spaces and sortability. 
We present some details below.

\begin{table}
\centering
\scriptsize
\caption{Summary of topologies.}
\vspace{-.1in}
\begin{tabular}{|c||c|c|c|c|c|c|c|}
    \hline
    Topology & Abilene & B4 & CWIX & BTNorthAmerica & Tinet & Deltacom & Ion \\
    \hline
    \#nodes & 11 & 12 & 21 & 36 & 48 & 103 & 114 \\
    \hline
    \#links & 14 & 19 & 26 & 76 & 84 & 151 & 135 \\
    \hline
    \end{tabular}
    \label{tab:topo}
\end{table}

\begin{table}
\centering
\caption{Summary of optimization scenarios.}
 \vspace{-.1in}
\centering
\resizebox{\textwidth}{!}{
\begin{tabular}{|c|c|c|c|}
\hline
Scenario        &Metric group        &Target function space   &Sortable? \\ \toprule
\textbf{MCF}                                                 &( \throughput, -\latency )      
&\begin{tabular}{@{}c@{}} $\throughput~*~?? ~-~ \max(\throughput - ??,~ 0) ~*~ ??$ \\ $- \latency ~*~ ?? ~-~ \max(\latency~-~??,~ 0) ~*~ ??$ \end{tabular}  &Yes \\ \hline
\textbf{BW}      &\begin{tabular}{@{}c@{}} $\big($ $avg_k$: average allocation to \\ the flows in the $k$-th class $\big)$  \end{tabular}              
&$\displaystyle \sum_{1 \leq k \leq K} w_k \log \big(avg_k\big) ~~~~~~(w_k > 0)$  &No \\ \hline
\textbf{NF}                                                
&\begin{tabular}{@{}c@{}} $\big($ $zn_{i}, zf_{i}$: guaranteed fraction of the \\ traffic demand of group $i$ under normal \\ conditions and failures respectively $\big)$ \end{tabular}     
&$\displaystyle \sum_{i} wn_{i} * zn_{i} + wf_{i} * zf_{i} ~~~~~~(wn_{i}, wf_{i} > 0)$  &No \\ \hline
\textbf{OSPF}                                                & $( -\latency, -\utilization )$      
&\begin{tabular}{@{}c@{}c@{}}
$\begin{cases}
\utilization & \utilization > ?? \\
\latency ~*~ ?? ~+~ \utilization ~*~ ?? & ?? < \utilization < ?? \\
\latency & otherwise
\end{cases}$
\end{tabular}  
&Yes \\ \hline
\end{tabular}
}
\label{tab:scenarios}
\end{table}

\textbf{Balancing throughput and latency (MCF).}
This is our running example based on~\cite{swan} described throughout the paper. This bandwidth allocation problem focuses on a single traffic class and considers balancing the throughput and latency in the network. 

\textbf{Utility maximization with multiple traffic classes (BW).}
A well-studied optimization problem is maximizing utility when allocating bandwidth
to traffic of different classes~\cite{srikant04, hierarchicalBW, jrex:JSAC13}.
Many applications such as file transfer have concave utility functions
which indicate that as more traffic is received, the marginal utility in obtaining a higher allocation is smaller. A common concave utility function which is widely used is a logarithmic utility function, where a flow that receives a bandwidth allocation of $x$ gets a utility of $\log x$.
Consider $N$ flows, and $K$ classes. Each flow belongs to one of the classes with $F^k$ denoting the set of flows belonging to class $k$. The weight of class $k$ is denoted by $w_k$ and is a knob manually tuned today to control the priority of the class, which we treat as an unknown in our framework.


\textbf{Performance with and without failures (NF).}
Resilient routing mechanisms guarantee the network does not experience congestion on failures~\cite{pcf_sigcomm20,ffc_sigcomm14,r3:sigcomm10,nsdiValidation17,cvarSigcomm19} by solving optimization problems that conservatively allocate bandwidth to flows while planning for a desired set of failures. We consider the model used in ~\cite{ffc_sigcomm14} to determine how to allocating bandwidth to flows while being robust to single link failure scenarios. We consider an objective with (unknown) knobs $w_{ni}$ and $w_{fi}$ that trade off performance under normal conditions and failures tuned differently for each group of flows $i$. 

%

\textbf{Balancing latency and link utilization (OSPF).}
Open Shortest Path First (OSPF) is a widely used link-state routing protocol for intra-domain internet and the traffic flows are routed on shortest paths~\cite{OSPFWeights}. A variant of OSPF routing protocol assigns a weight to each link in the network topology and traffic is sent on paths with the shortest weight and equally split if multiple shortest paths with same weight exist. By configuring the link weights, network architect can tune the traffic routes to meet network demands and optimize the network on different metrics~\cite{OSPFWeights}. We consider a version of the OSPF problem where link weights must be tuned to ensure link utilizations are small while still ensuring low latency  paths~\cite{LatencyUtilizationSigcomm18}. 
%
Intuitively, when utilization is higher than a threshold, it becomes the primary metric to optimize, and when lower than a threshold, 
minimizing latency is the primary goal. In between the thresholds, both latency and utilization are important, and can be scaled  in a manner chosen by the network architect. We treat the thresholds and the scale factors as unknowns in the objective.




\subsection{Implementation} 
\label{sec:implementation}
Note that in \name{}, once the scenario and the topology are fixed, we can pre-compute a large pool of objective-program pairs, from which the PCS is generated.
For each scenario-topology combination, we used the templates shown in Table~\ref{tab:scenarios} to generate a pool of random target functions. Then for all scenarios except for \textbf{OSPF}, we generate their corresponding optimal allocations using Gurobi~\cite{gurobi}, a state-of-the-art solver for linear and mixed-integer programming problems.
For \textbf{OSPF}, as we are not aware of any existing tools that can symbolically solve the optimization problem, we used traditional synthesis approaches (cf. Fig~\ref{fig:mcf}) to generate numerous feasible link weight assignments.
The pre-computed target functions and allocations are paired to form a large PCS serving as the candidate pool. We provide details of pool creation in~\ref{subsec:scaling}.


When the teacher is an imperfect oracle or a human user, inconsistent answers may potentially result in the algorithm unable to determine objectives that meet all user preferences. To ensure \name{} robust to an imperfect oracle, inspired by the ensemble methods~\cite{ensemble-method},
we implemented \name{} as a multi-threaded application where a primary thread accepts all inputs and the backup threads run the same algorithm but randomly discard some user inputs. In case no objective could satisfy all user preferences, a backup thread with the largest satisfiable subset of user inputs would take over.



\subsection{Oracle-Based Evaluation}
\label{sec:oracle}

We used \name{} to solve all scenario-topology combinations described above, through interaction with (both perfect and imperfect) oracles who answer queries based on their internal objectives.
As a first-of-its-kind system, \name{} does not have any similar systems to compare with. Therefore, we developed a variant of \name{} which adopts a simple but aggressive strategy: repeatedly proposing optimal candidates generated from randomly picked target functions. We call this baseline algorithm \baseline{}, as the teacher's preference is not used to prune extra candidates from the search space. 
As a solution's real quality (per Def~\ref{def:quality}) is not practically computable, we approximate its quality using its rank in our pre-computed candidate pool.\footnote{The quality is computed among Pareto optimal solutions only. In other words, the solution quality as per Definition~\ref{def:quality} should be higher than what we report here.} Moreover, as \name{} involves random sampling, we ran each synthesis task 301 times and reported the median of the (approximated) solution quality achieved after every query.


\subsubsection*{Evaluation on perfect teacher}

We built an oracle to play the role of a perfect teacher who answers all queries correctly based on a ground truth objective. 
For each scenario, we as experts manually crafted a target function that fits the template and reflects practical domain knowledge.\footnote{The ground truth does not have to match the template; see \S\ref{sec:userstudy} for human teacher who is oblivious to the template.}

We presents the performance of \name{} and \baseline{} on solving four network optimization problems (cf. Table~\ref{tab:scenarios}) on seven different topologies (cf. Table~\ref{tab:topo}).
Our key observation is that \emph{\name{} performed constantly better than \baseline{} in every scenario-topology combination.}
In the interest of space, we collected the quality of solutions achieved over all seven topologies for each optimization scenario and presented the median (shown as dots) and the range from max to min (shown as bars).
As Fig~\ref{fig:normal} shows, our voting-guided algorithm is very effective. \name{} always only needs 5 or fewer queries to obtain a solution quality achieved by \baseline{} in 10 queries.
We note that 
although the all-topology range for \name{} sometimes overlaps with the corresponding range for \baseline{} (primarily for the \textbf{NF} scenario), 
\name{} still outperformed \baseline{} for every topology. We leave the topology-wise results for \textbf{NF} in Appendix~\ref{app:fc}. 
Further, in all the cases where we could compute the optimal under the ground truth objective, we confirmed that programs recommended by \name{} achieved at least 99\% of the optimal.

\begin{figure}
    \centering
    \begin{minipage}{0.65\textwidth}
    \begin{subfigure}[c]{0.49\textwidth}
    \centering
    \includegraphics[width=\textwidth]{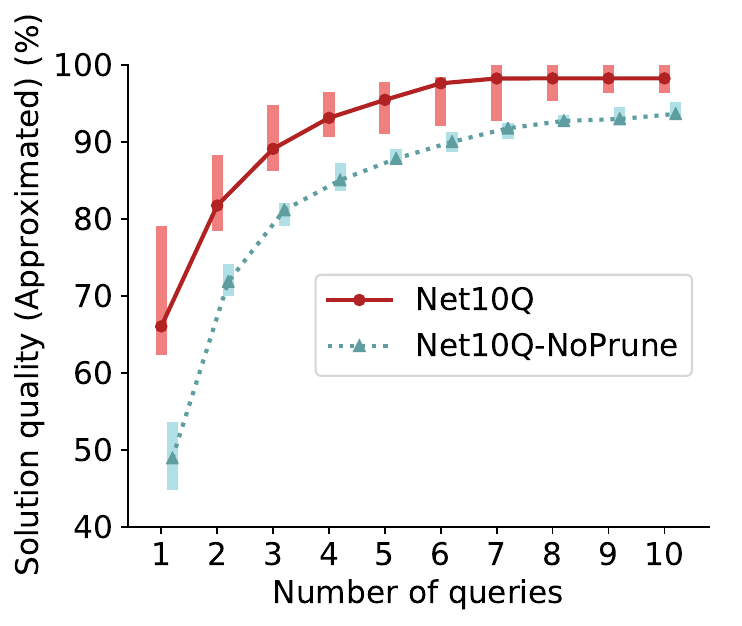}
    \caption{MCF.}
    \label{fig:mcf-grouped}
    \end{subfigure}
    \begin{subfigure}[c]{0.49\textwidth}
    \centering
    \includegraphics[width=\textwidth]{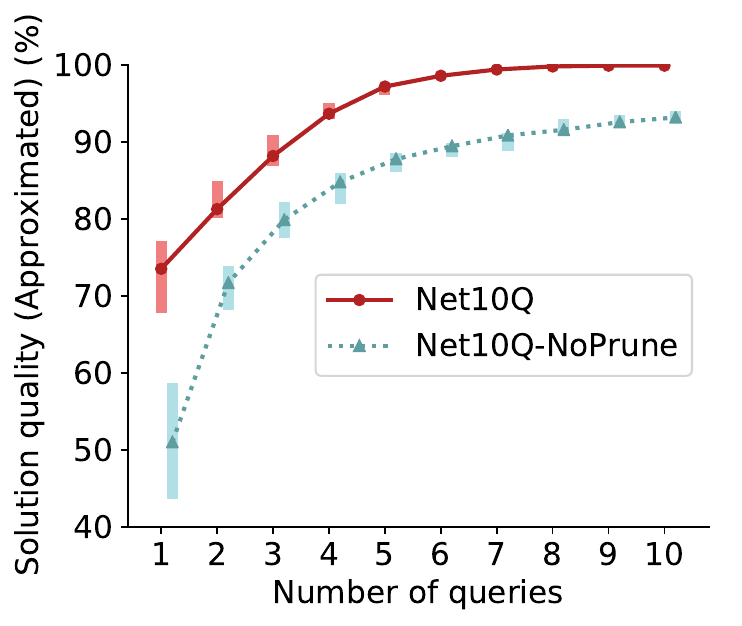}
    \caption{BW.}
    \label{fig:bw-grouped}
    \end{subfigure}
    \begin{subfigure}[c]{0.49\textwidth}
    \centering
    \includegraphics[width=\textwidth]{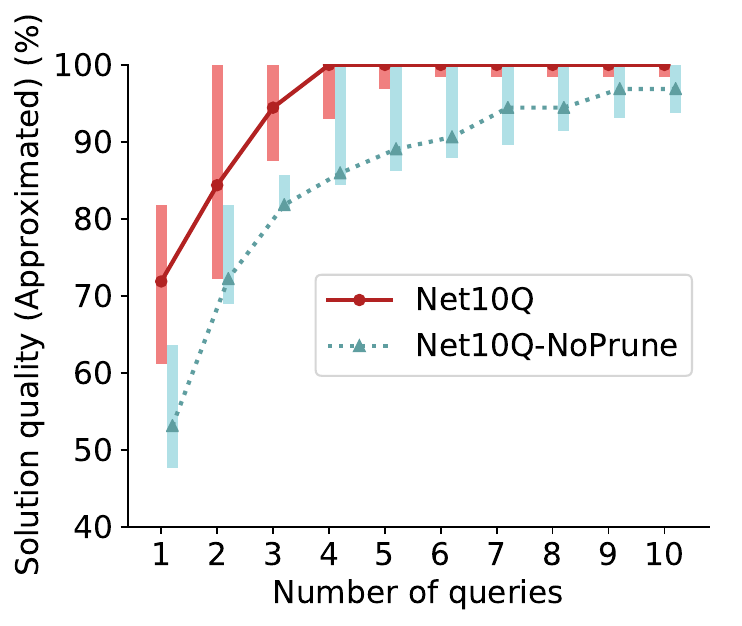}
    \caption{NF.}
    \label{fig:fc-grouped}
    \end{subfigure}
    \begin{subfigure}[c]{0.49\textwidth}
    \centering
    \includegraphics[width=\textwidth]{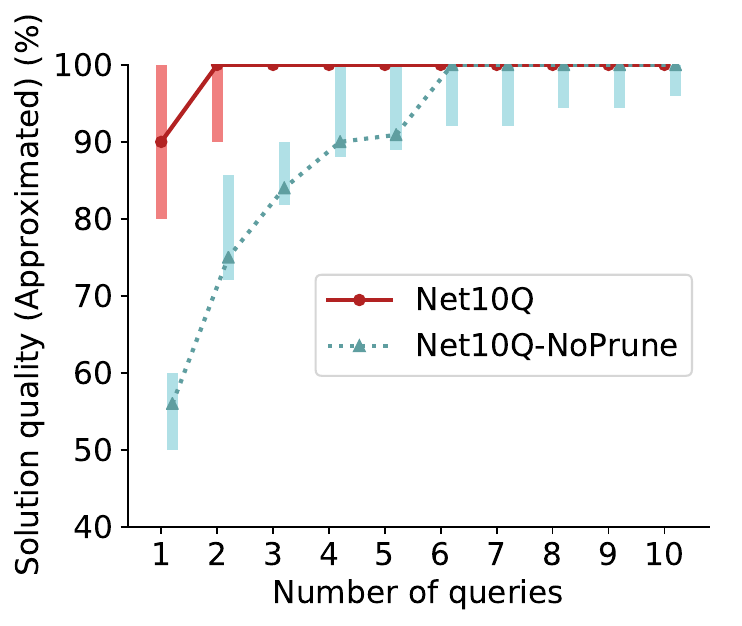}
    \caption{OSPF.}
    \label{fig:ospf-grouped}
    \end{subfigure}
    \caption{Comparing \name{} and \baseline{} with perfect oracle (across all seven topologies). Curves to the left are better. 
    (More detailed, per-topology results for \textbf{NF} is available in Appendix~\ref{app:fc})}
    \label{fig:normal}
    \end{minipage}\hfill
    \begin{minipage}{0.33\textwidth}
    \vspace{-.1in}
    \begin{subfigure}[c]{\columnwidth}
    \centering
    \includegraphics[width=.9\textwidth]{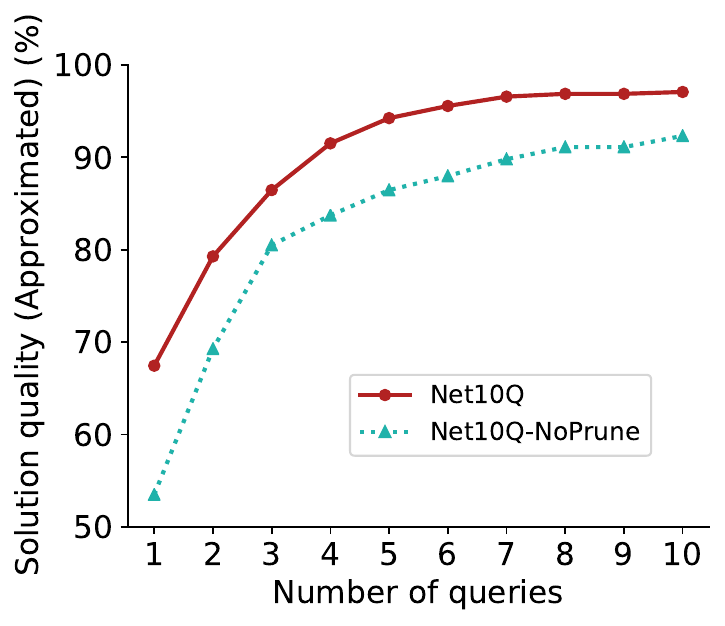}
    \caption{\name{} vs. \baseline{} ($p=10$).}
    \label{fig:robust}
    \end{subfigure}
    \hfill%
    \begin{subfigure}[c]{\columnwidth}
    \centering
    \includegraphics[width=.8\textwidth]{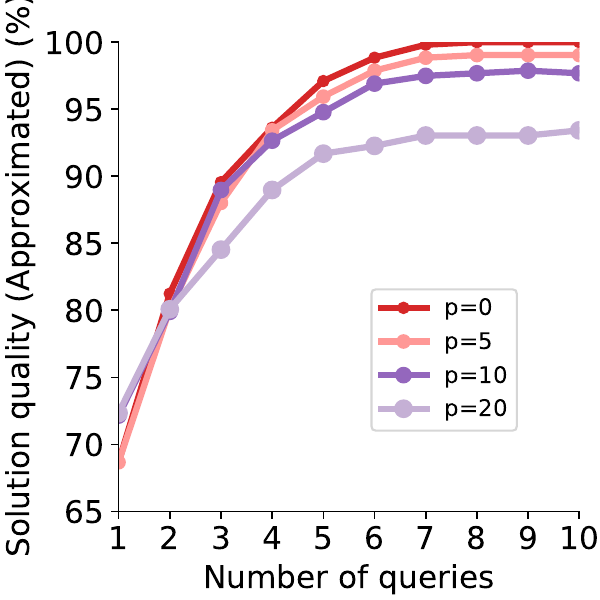}
    \caption{Sensitivity to $p$.}
    \label{fig:sensitivity}
    \end{subfigure}
    \vspace{-.1in}
    \caption{Performance of \name{} with imperfect oracle (\textbf{BW} on CWIX).
    \label{fig:imperfect}
    }
    \end{minipage}
\end{figure}

\subsubsection*{Evaluation on imperfect teacher}
We also adapt the oracle to simulate imperfect teachers
whose responses are potentially inconsistent, based on an error model described below.
When an allocation candidate is presented, the imperfect oracle assigns a random reward that is sampled from a normal distribution, whose expectation is the true reward under corresponding ground truth objective. The standard deviation is $p$ percentage of the distance between average reward and optimal reward under ground truth objective. 


Fig~\ref{fig:imperfect} shows our experimental results with imperfect teacher on \textbf{BW} with the CWIX topology. In the interest of space, we defer results of other scenarios to Appendix~\ref{app:exp}, from which we see similar trend. Fig~\ref{fig:robust} compares \name{} with \baseline{} under the inconsistency level $p = 10$. 
\name{} continues to outperform \baseline{}.
Fig~\ref{fig:sensitivity} presents the sensitivity of \name{} on the inconsistency level ($p = 0, 5, 10, 20$).
%
%
Although the solution quality degrades with higher
inconsistency $p$, \name{} achieves relatively high solution quality even when $p$ is as high as 20. 
The results show that \name{} tends to be able to handle moderate feedback inconsistency from an imperfect teacher, although investigating ways to achieve even higher robustness is an interesting area for future work.

\subsubsection*{Runtime and Scaling}
\label{subsec:scaling}
We first discuss the online query time experienced by users.
For every synthesis task mentioned above,
and across all topologies, the average running time spent by \name{} for each interactive user query is less than 0.15 seconds.
The approach scales well with topology size since a pool of objective-program pairs is created offline.
When creating a pool, the solving time for a single optimization problem is under a second for most topologies on all scenarios on a 2.6 GHz 6-Core Intel Core i7 laptop with 16 GB memory, and we used a pool size of 1000 objectives. The only exception was the \textbf{NF} on the two largest topologies, Deltacom and Ion, which took 11.8 and 15.5 seconds respectively, and we used a smaller pool size in these cases
to limit the pool generation time.
%
%
Note that the pool creation occurs offline. Further, it involves solving multiple distinct optimization problems,
and is trivially parallelizable.

To examine sensitivity to pool size, we first generated 5000 objective-program pairs and then randomly sampled a given number of objective-program pairs to form a candidate pool. Evaluating on candidate pools of size ranging from 10 to 5000, 
we found that pools with 300 objective-program pairs are sufficient for \name{} to achieve over 99\% optimal  after 10 iterations. Please find details in Appendix~\ref{app:pool}.




\begin{figure}[b]
\vspace{-.1in}
\begin{subfigure}[c]{.25\columnwidth}
  \centering
  \includegraphics[width=.85\columnwidth,trim={0 3cm 0 2cm},clip]{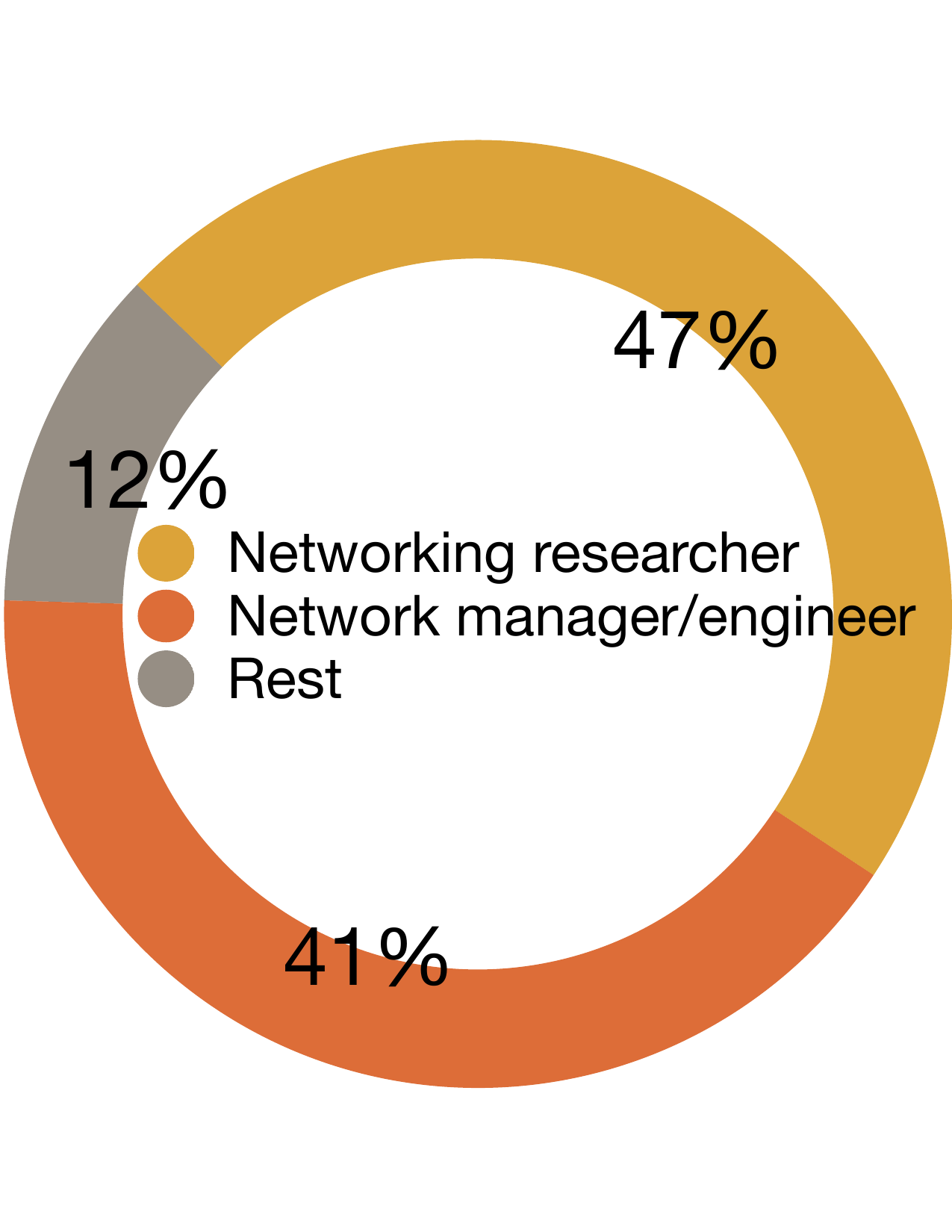}  
  \caption{User background.}
  \label{fig:pre_1}
\end{subfigure}
\begin{subfigure}[c]{.25\columnwidth}
  \centering
  \includegraphics[width=.85\columnwidth,trim={0 3cm 0 2cm},clip]{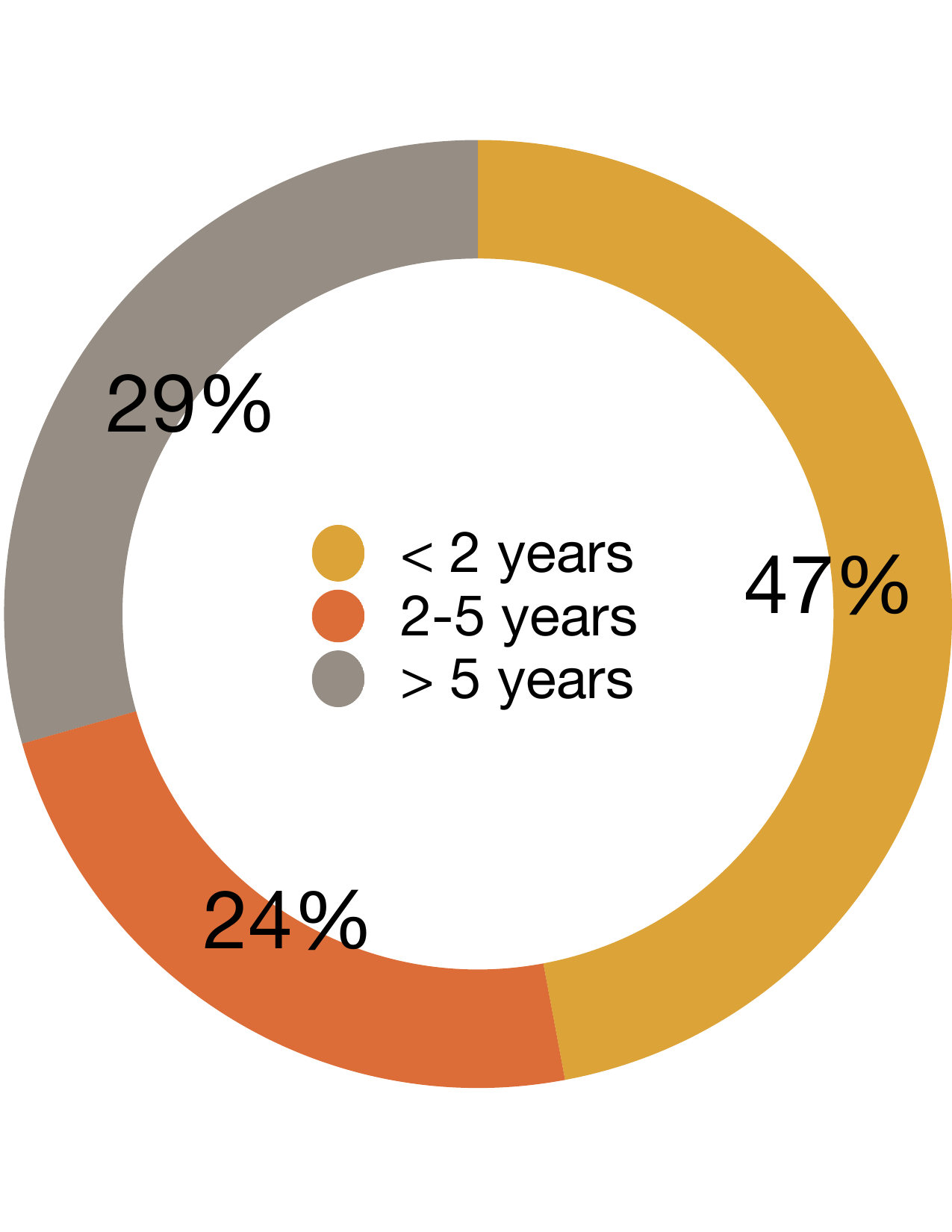}  
  \caption{User experience.}
  \label{fig:pre_2}
\end{subfigure}
\begin{subfigure}[c]{.47\textwidth}
\centering
  \includegraphics[width=\textwidth,trim={0 9cm 0 8cm},clip]{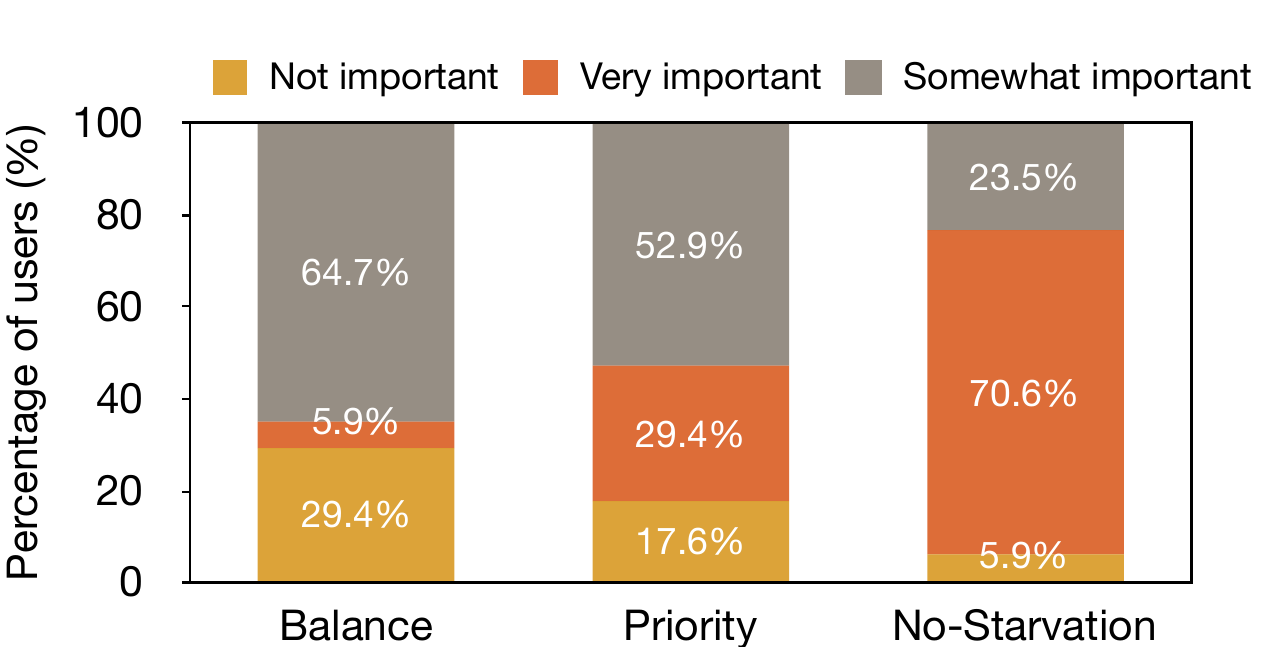}  
  \caption{Diversity of user policies.}
  \label{fig:pre_3}
\end{subfigure}
\vspace{-.15in}
\caption{User background and diversity of chosen policies.}
\label{fig:usr_profile}
\end{figure}

\begin{figure}[t]

\begin{minipage}{.62\textwidth}
  \centering
\begin{subfigure}[b]{.48\columnwidth}
  \centering
  \captionsetup{width=1\textwidth}
  \includegraphics[width=.75\columnwidth,trim={0 3cm 0 2cm},clip]{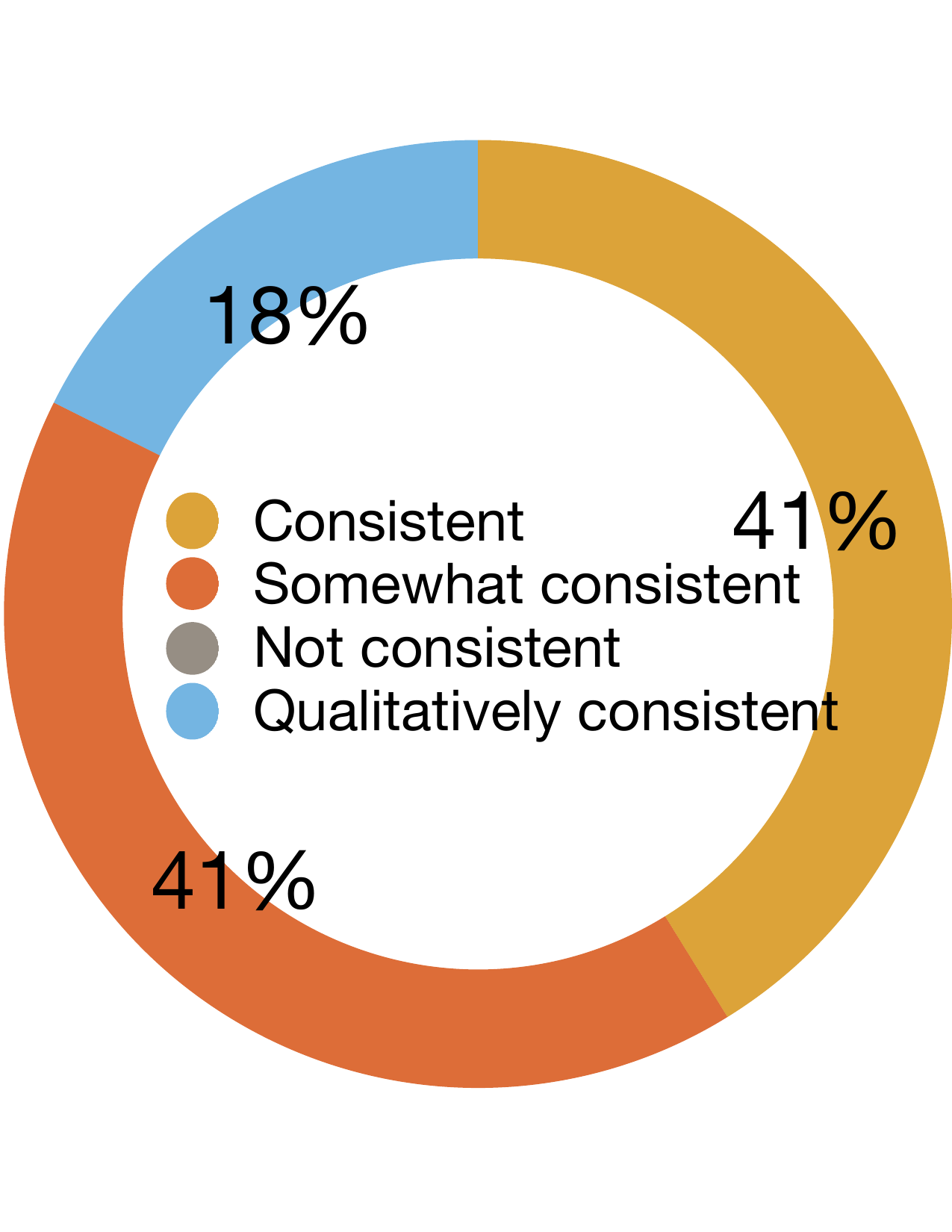}  
  \caption{Quality of recommendations.}
  \label{fig:post_1}
\end{subfigure}
\begin{subfigure}[b]{.48\columnwidth}
\centering
    \captionsetup{width=\textwidth}
  \includegraphics[width=.75\columnwidth,trim={0 3cm 0 2cm},clip]{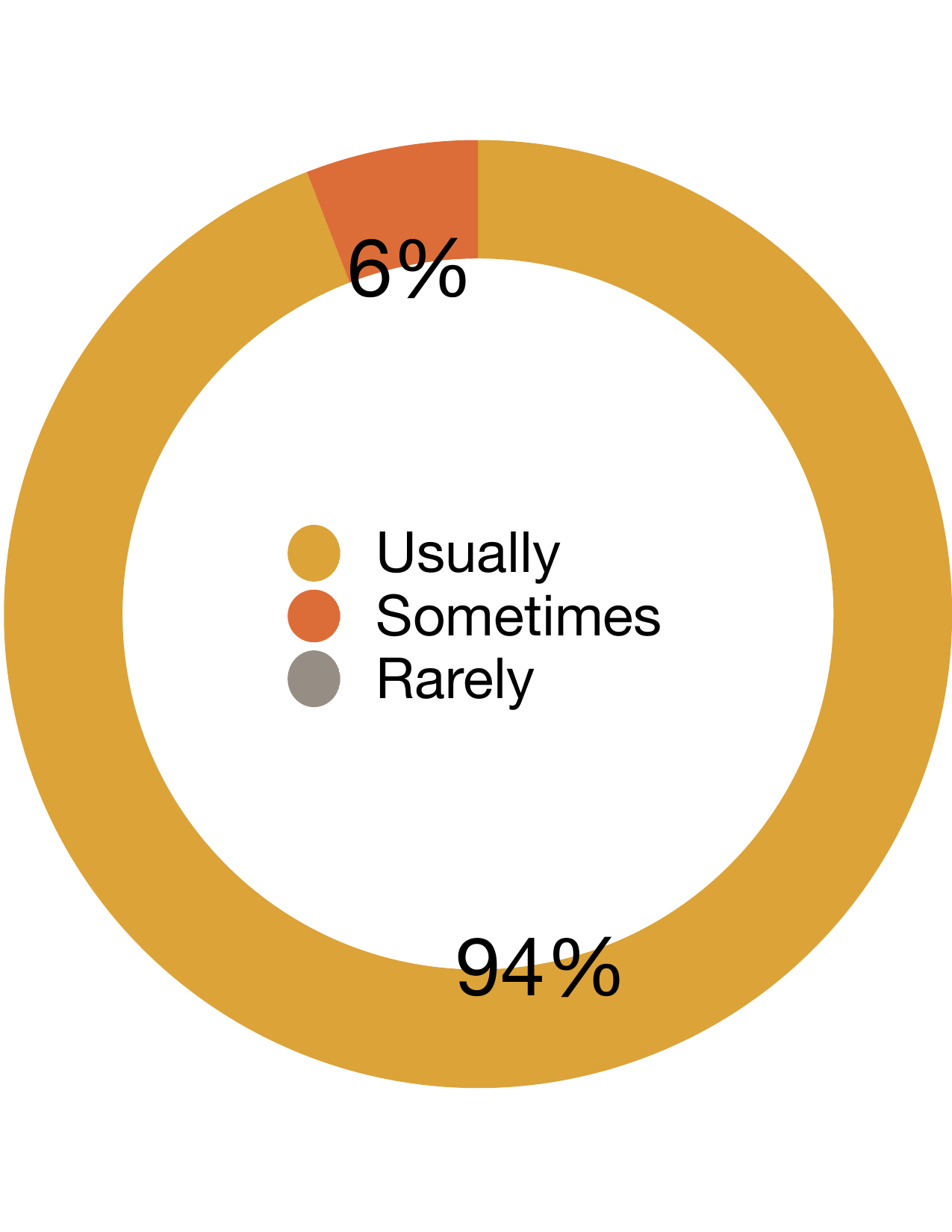}  
  \caption{Was time taken acceptable?}
  \label{fig:post_2}
\end{subfigure}
\vspace{-.1in}
\caption{Feedback on \name{} from real users.}
\label{fig:usr_fb}
\end{minipage}%
\hspace{.1in}
\begin{minipage}{.35\textwidth}
  \centering
  \includegraphics[width=.9\columnwidth]{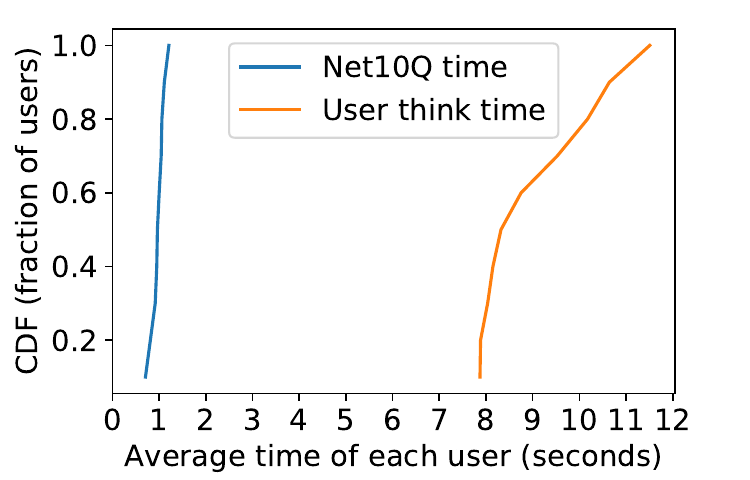}
  \caption{Avg. time per query across users.}
  \label{fig:post_3}
\end{minipage}
\end{figure}


\subsection{Pilot User Study}
\label{sec:userstudy}
We next report on a small-scale study involving 17 users. The primary goal of the user study is to evaluate \name{}  when the real objective is arbitrarily chosen by the user, and even the actual shape is unknown to \name{}. This is in contrast to the oracle experiments where the ground truth objectives are drawn from a template (with only the parameters unknown to \name{}). Further, like with imperfect oracles, users may not always correctly express relative preferences.

The user study was conducted online using an IRB-approved protocol. Participants were recruited with a minimal qualifying requirement being they have taken a university course in computer networking. Figs~\ref{fig:pre_1} and~\ref{fig:pre_2} show the background of users. 88\% of them are computer networking researchers or practitioners. 53\% of users have more than 2 years of experience managing networks.

Our user study used an earlier version of Algorithm~\ref{alg:voting} implemented as an online web application. Specifically, the PCSes were generated on the fly, rather than pre-generated.
To ensure responsiveness, the deployed algorithm set the threshold {\sc Thresh} = 2.
We note that the cloud application for our user study was developed and tested over multiple months, and in parallel to refinements we developed to the algorithm. We were conservative in deploying the latest version given the need for a robust user-facing system, and to ensure all participants saw the same version of the algorithm. 

The study focused on the \textbf{BW} scenario (with four classes of traffic) and the Abilene topology. The user was free to choose any policy on how bandwidth allocations were to be made, and answer queries based on their policy. In each iteration, the user was asked to choose between two different bandwidth allocations generated by \name{}. The user could either pick which allocation was better, or indicate it was too hard to call if the decisions were close. The study terminated after the user answered 10 questions, or when the user indicated she was ready to terminate the study.


In the post-study questionnaire, users were asked to characterize their policy by choosing how important it was to achieve each of three criteria below: (i) \textit{Balance}, indicating allocation across classes is balanced; (ii) \textit{Priority}, indicating how important it is to achieve a solution with more allocation to a higher priority class if a lower priority class does poorly; and (iii) \textit{No-Starvation}, indicating how important it is to ensure lower priority classes get at least some allocation.
Fig~\ref{fig:pre_3} presents a breakdown of user
policies. 70.6\% of users indicated \textit{Balance} were somewhat or very important. 82.3\% of users indicated \textit{Priority} were somewhat or very important, while 94.1\% of users indicated \textit{No-Starvation} was somewhat or very important. The results were consistent with the qualitative description each user provided regarding his/her policy.
Overall, almost all users were seeking to achieve \textit{Balance} and \textit{Priority} avoiding the extremes (starvation) - however, they differed considerably in terms of where they lay in the spectrum based on their qualitative comments.

\subsubsection*{Results}
Fig~\ref{fig:post_1} summarizes how well the recommended allocations generated by \name{} met user policy goals. 82\% of the users indicated that the final recommendation is consistent, or somewhat consistent with their policies. The remaining 18\% of the users took the study before we explicitly added the objective question to ask users to rate how well the recommended policy met their goals. However, the qualitative feedback provided by these users indicated \name{} produced allocations consistent with user goals. For instance, one expert user said: \emph{``The study was well done in my opinion. It put the engineer/architect in a position to make a qualified decision to try and chose the most reasonable outcome.''} 

Fig \ref{fig:post_2} shows that 94\% of users indicated the 
response time with \name{} was usually acceptable, while 6\% indicated the time was sometimes acceptable.
Fig \ref{fig:post_3} shows a cumulative distribution across users of the average \name{} time (i.e., the average time taken by \name{} between receiving the user's choice and presenting the next set of allocations). For comparison, the figure also shows a distribution of the average user think time (i.e., the average time taken by a user between when \name{} presents the options and when the user submits her/his choice).
The time taken by \name{} was hardly a second, and much smaller than the user think time which varied from 8 to 12 seconds, indicating \name{} can be used interactively.

Across all users, \name{} was always able to find a satisfiable objective that met all of the user's preferences (it never needed to
invoke the fallback approach (\S\ref{sec:implementation}) of only considering a subset of user preferences). This indicates users express their preferences in a relatively consistent fashion in practice.
Inconsistent responses may still allow for satisfiable objectives,
however we are unable to characterize this in the absence of the exact ground truth objective.

Overall, the results show the promise of a comparative synthesis approach even when dealing with complex user chosen objectives of unknown shape.  We believe there is potential for further improvements with all the optimizations in Algorithm~\ref{alg:voting}, and other future enhancements.

\section{Related Work}

\textbf{Network verification/synthesis.}
As we discussed in~\S\ref{sec:motivation}, the na\"{i}ve approach to comparative synthesis proposed in \cite{comparative-synthesis} is preliminary and may involve prohibitively many queries. In contrast, we generalize and formalize the framework, design the first synthesis algorithm with the explicit goal of minimizing queries, present formal convergence results, and conduct extensive evaluations including a user study.

Much recent work applies program languages techniques to networking. Several works focus on synthesizing forwarding tables or router configurations based on predefined rules~\cite{genesis,netgen,merlin,cocoon,netegg,synet,netcomplete}, or synthesizing provably-correct network updates~\cite{McClurgPLDI15, McClurgPLDI16}. Much research focuses on verifying network configurations and dataplanes~\cite{Beckett2020, qarc, netdice}, and does not consider synthesis.
Recent works mine network specification from
configurations~\cite{config2spec},
generate code for programmable switches from program sketches~\cite{domino,chipmunk},
or focus on generating network classification programs from raw network traces~\cite{shi:tacas}. In contrast to these works, we focus on synthesizing network designs to meet quantitative objectives, with the objectives themselves not fully specified.


\textbf{Optimal synthesis.}
There is a rich literature on synthesizing optimal programs with respect to a fixed or user-provided quantitative objective.
Some of these techniques aims to solve optimal syntax-guided synthesis problems by minimizing given cost functions~\cite{synapse,qsygus}. Other approaches either generate optimal parameter holes in a program through probabilistic reasoning~\cite{Chaudhuri14} or solve SMT-based optimization problems~\cite{symba}, under specific target functions.
In example-based synthesis, the examples as a specification can be insufficient or incompatible. Hence quantitative objectives can be used to determine to which extent a program satisfies the specification or whether some extra properties hold.
\citet{gulwani2019quantitative} and \citet{wrex} defined the problem of quantitative PBE (qPBE) for synthesizing string-manipulating programs that satisfy input-output examples as well as minimizing a given quantitative cost function.
%
Our work is different from all optimal synthesis work mentioned above as in our setting, the objective is unknown and automatically learnt/approximated from queries. 

\textbf{Human interaction.}
Many novel human interaction techniques have been developed for synthesizing string-manipu\-lating programs.
A line of work focuses on proposing user interaction models to help resolve ambiguity in the examples~\cite{mayer15} and/or accelerate program synthesis~\cite{Drachsler-Cohen17,peleg18}.
%
Using interactive approaches to solve multiobjective optimization problems has been studied by the optimization community for decades (as surveyed by \citet{multiobjective}).
Morpheus~\cite{metricComparison} is a routing control platform that allows users to flexibly specify their policy preferences. Morpheus requires pairwise comparisons on relative weights of metrics as input, which can be viewed as a special form of target functions.
Our novelty on the interaction method is to proactively ask comparative queries on concrete network designs, with the aim of minimizing the number of queries and maximizing the desirability of the found solution. The comparison of concrete candidates is easier than asking the user to provide rank scale, marginal rates of substitution or aspiration level, which is done by most existing approaches. The objectives we target to learn for network design also involve guard conditions, which is beyond what most existing methods can handle.

\textbf{Oracle-guided synthesis.}
The learner-teacher interaction paradigm we use in the paper
has been studied in the context of programming-by-example (PBE), aiming at minimizing the sequence of queries. \citet{jha10}~presented an oracle-based learning approach to automatic synthesis of bit-manipulating programs and malware deobfuscation over a given set of components. Their synthesizer generates inputs that distinguishes between candidate programs and then queries to the oracle for corresponding outputs. \citet{Ji2020} followed up and studied how to minimize the sequence of queries. This line of work allows input-output queries only (``what is the output for this input?'') to distinguish different programs. If two programs are distinguishable, they consider them equivalent or a ranking function is given.
In invariant synthesis, \citet{ice-cs} followed this paradigm and synthesized inductive invariants by checking hypotheses (equivalent to \Validate~queries in our setting) with the teacher. \citet{ogis} proposed a theoretical framework, called oracle-guided inductive synthesis (OGIS) for inductive synthesis. The framework OGIS captures a class of synthesis techniques that operate via a set of queries to an oracle. Our comparative synthesis can be viewed as a new instantiation of the OGIS framework.

\textbf{Active learning.} 
Our algorithm for comparative synthesis has parallels to active learning~\cite{active-learning,angluin2004} in the machine learning community, which interactively queries a user to label data in settings where labeling is expensive. 
Query-by-committee (QBC)~\cite{query-by-committee} is a general query strategy framework that chooses the most informative query based on the disagreements among a committee of models. How to construct the committee space and how to measure the disagreements among committee members are questions must be answered when instantiating the QBC framework. In contrast, we interactively query users to learn objectives using a carefully designed search space PCS and propose ways to estimate query informativeness specific to our setting. 



\section{Conclusions}
In this paper, we have presented comparative synthesis for learning near-optimal programs with indeterminate objectives, and applied it to network design.
First, we have developed a formal framework for comparative synthesis through queries with users.
Second, we have developed the first learning algorithm for our framework that combines program search and objective learning, and
seeks to achieve high solution quality with relatively few queries. We proved that the algorithm guarantees the median quality of solutions converges logarithmically to the optimal, or even linearly when the target function space is sortable (a property satisfied by two of our case studies).
Third, we have developed \name{}, a system implementing our approach. 
Experiments show that \name{} only makes half or less queries than the baseline system to achieve the same solution quality, and robustly produces high-quality solutions with inconsistent teachers.
A pilot user study with network practitioners and researchers shows \name{} is effective in finding allocations that meet diverse user policy goals in an interactive fashion.
Overall, the results show the promise of our framework, which we believe can 
help in domains beyond networking in the future.


\bibliographystyle{ACM-Reference-Format}
\bibliography{refs,refs2,refs3,ref-networkdesign}

\clearpage

\appendix






\section{Additional Experimental Results}

\subsection{Evaluation on Perfect Oracle}
\label{app:fc}
The range of solution quality achieved by \name{} sometimes overlaps with \baseline{} in Fig~\ref{fig:fc-grouped} -- however \name{} outperformed \baseline{} in every scenario-topology combination. To demonstrate this for the \textbf{NF} scenario which saw the most overlap in Fig~\ref{fig:fc-grouped}, consider
Fig~\ref{fig:fc_individual} which presents a detailed breakdown of results by topology, and clearly illustrates \name{}'s out-performance. 

\begin{figure*}
\begin{subfigure}[c]{0.24\textwidth}
\centering
\includegraphics[width=\textwidth]{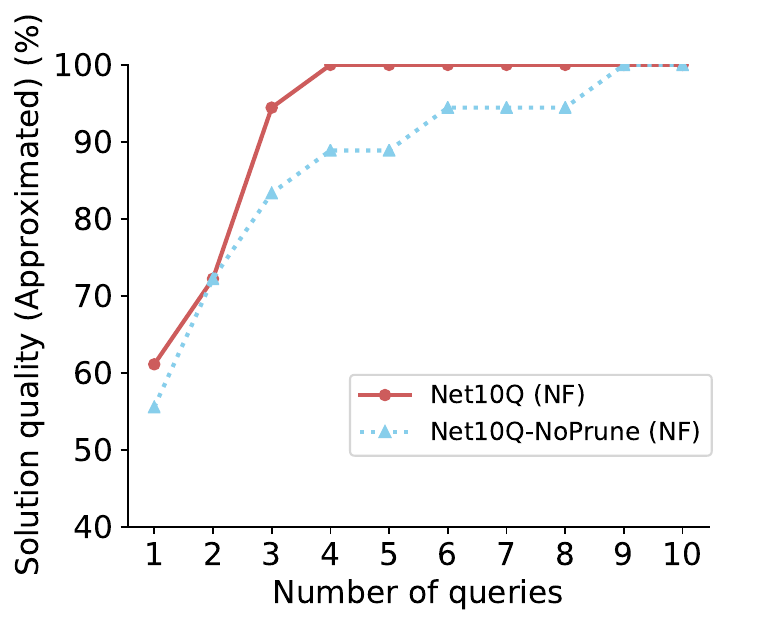}
\caption{Abilene.}
\end{subfigure}
\begin{subfigure}[c]{0.24\textwidth}
\centering
\includegraphics[width=\textwidth]{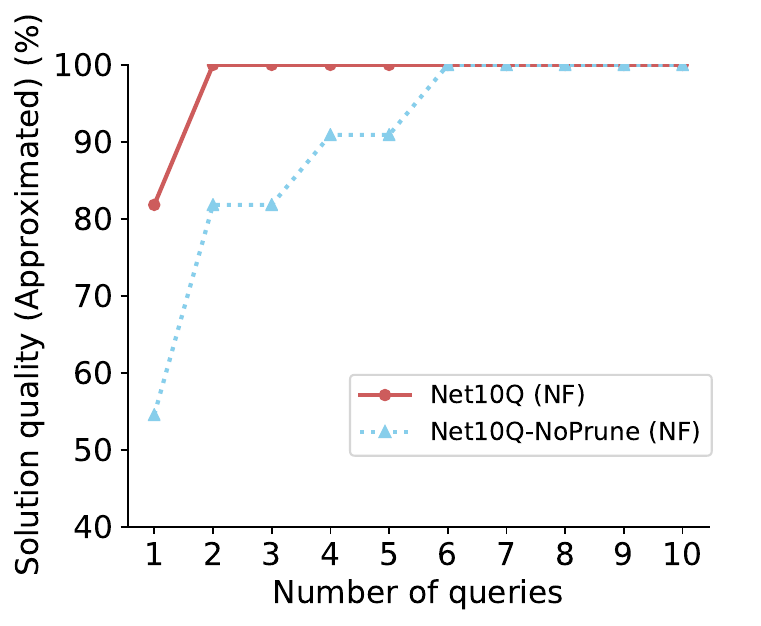}
\caption{B4.}
\end{subfigure}
\begin{subfigure}[c]{0.24\textwidth}
\centering
\includegraphics[width=\textwidth]{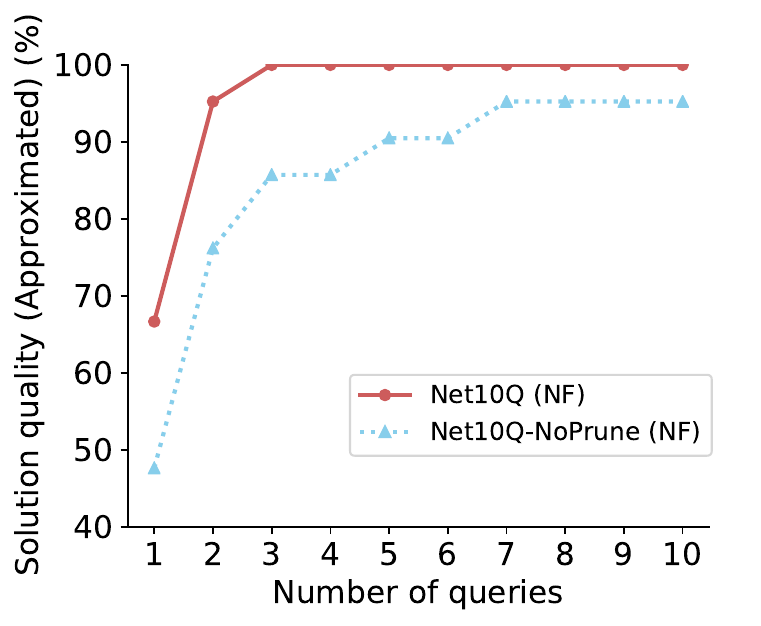}
\caption{CWIX.}
\end{subfigure}
\begin{subfigure}[c]{0.24\textwidth}
\centering
\includegraphics[width=\textwidth]{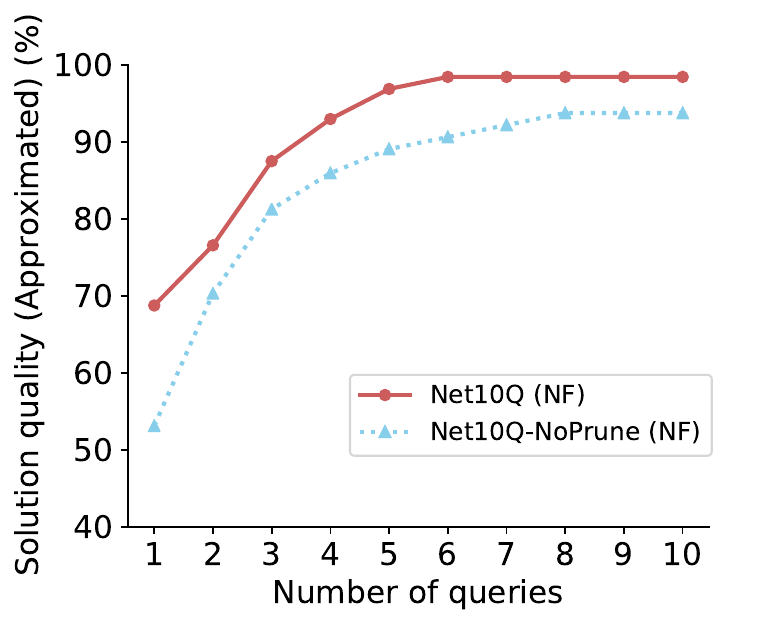}
\caption{BTNorthAmerica.}
\end{subfigure}
\begin{subfigure}[c]{0.33\textwidth}
\centering
\includegraphics[width=\textwidth]{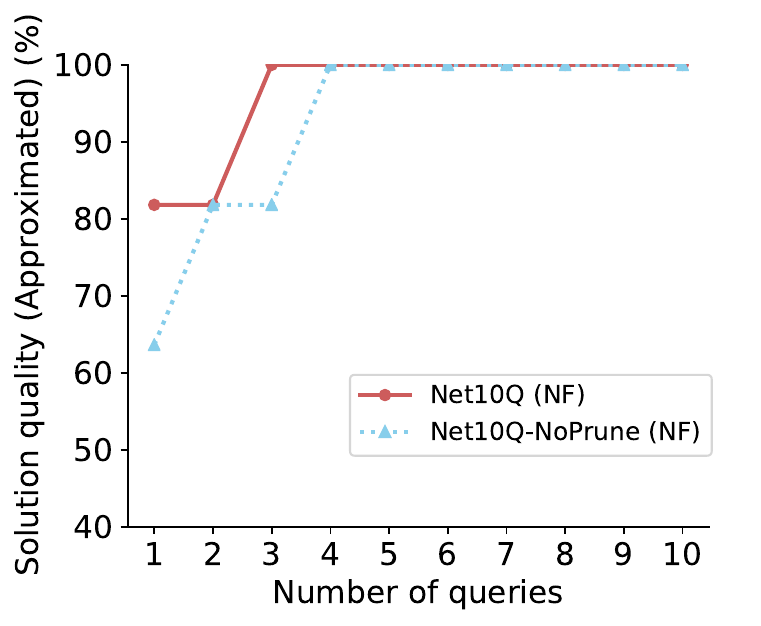}
\caption{Tinet.}
\end{subfigure}
\begin{subfigure}[c]{0.33\textwidth}
\centering
\includegraphics[width=\textwidth]{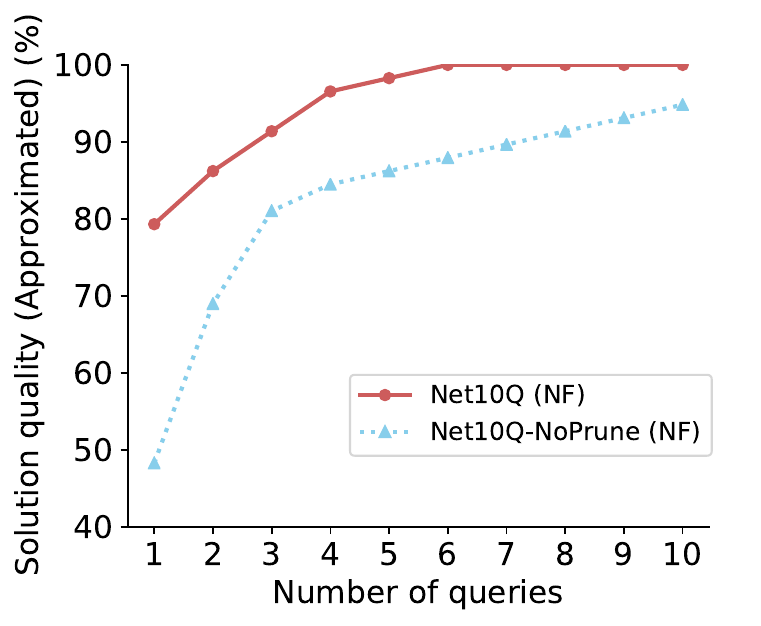}
\caption{Deltacom.}
\end{subfigure}
\begin{subfigure}[c]{0.32\textwidth}
\centering
\includegraphics[width=\textwidth]{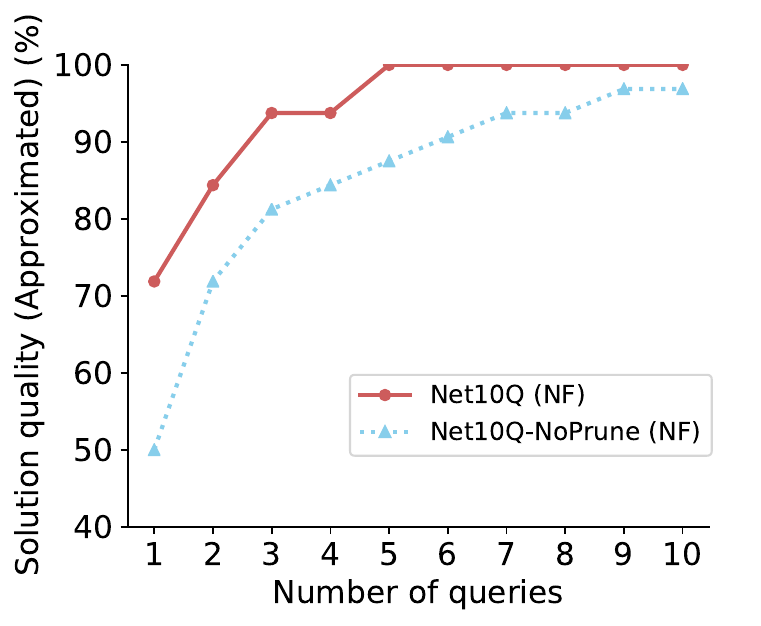}
\caption{Ion.}
\end{subfigure}
\caption{Comparing \name{} and \baseline{} with perfect oracle for \textbf{NF} (each subfigure for a topology). Curves to the left are better. \name{} outperforms in all topologies.}
\label{fig:fc_individual}
\end{figure*}

\subsection{Evaluation on Imperfect Oracle}
\label{app:exp}

\begin{figure*}[t]
\begin{subfigure}[c]{0.33\textwidth}
\centering
\includegraphics[width=\textwidth]{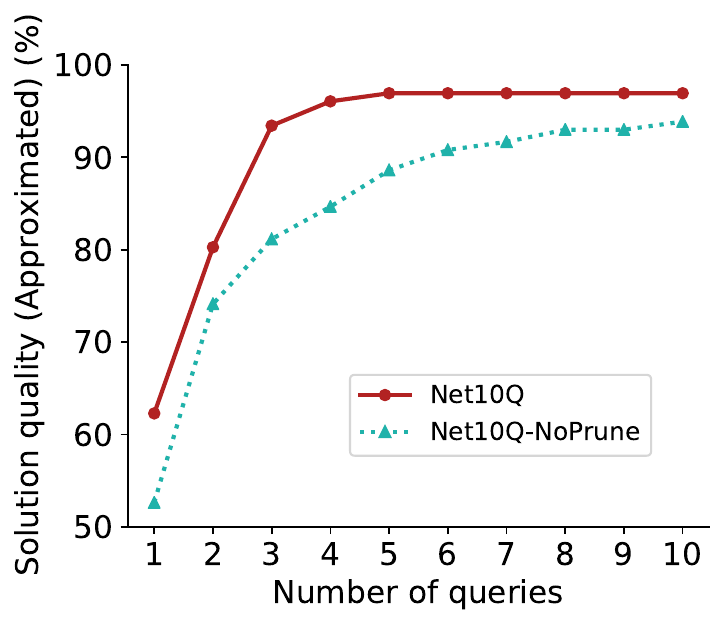}
\caption{MCF.}
\end{subfigure}
\begin{subfigure}[c]{0.33\textwidth}
\centering
\includegraphics[width=\textwidth]{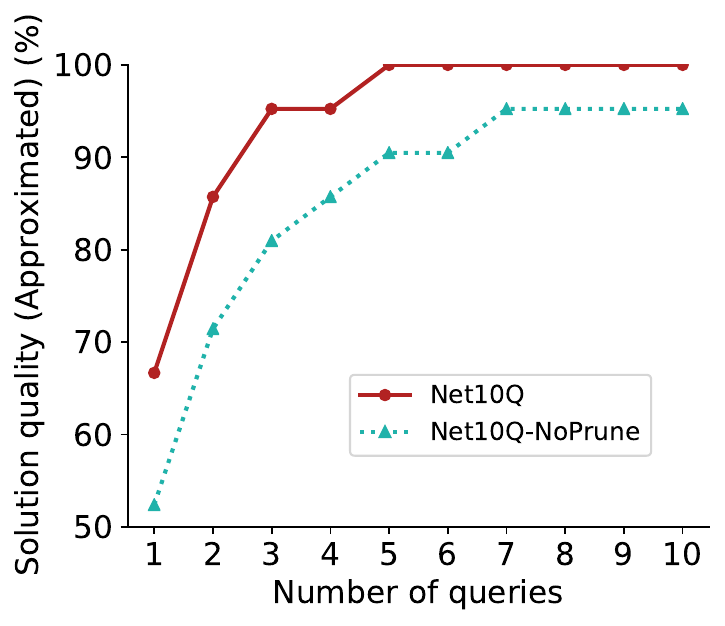}
\caption{NF.}
\end{subfigure}
\begin{subfigure}[c]{0.32\textwidth}
\centering
\includegraphics[width=\textwidth]{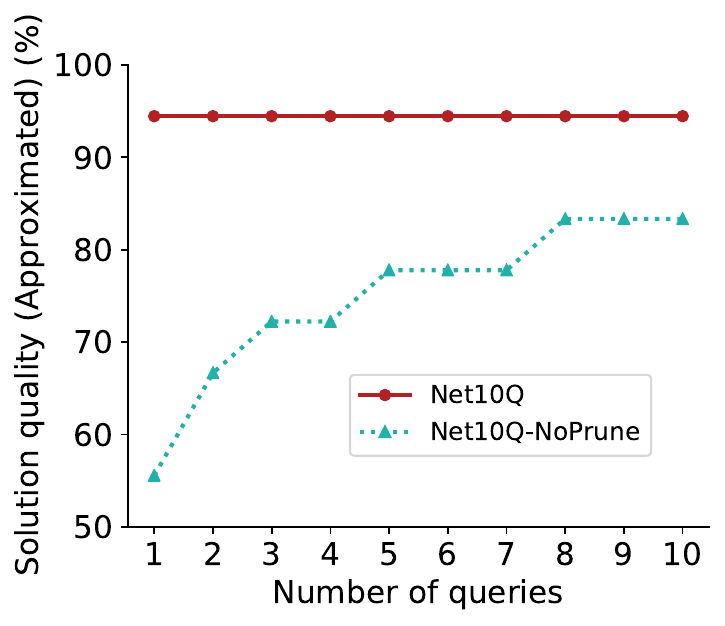}
\caption{OSPF.}
\end{subfigure}
\caption{Performance of \name{} with imperfect oracle ($p=10$) for \textbf{MCF}, \textbf{NF} and \textbf{OSPF} on CWIX.}
\label{fig:cwix_imperfect10}
\end{figure*}

\begin{figure*}[t]
\begin{subfigure}[c]{\textwidth}
\centering
\includegraphics[width=\textwidth]{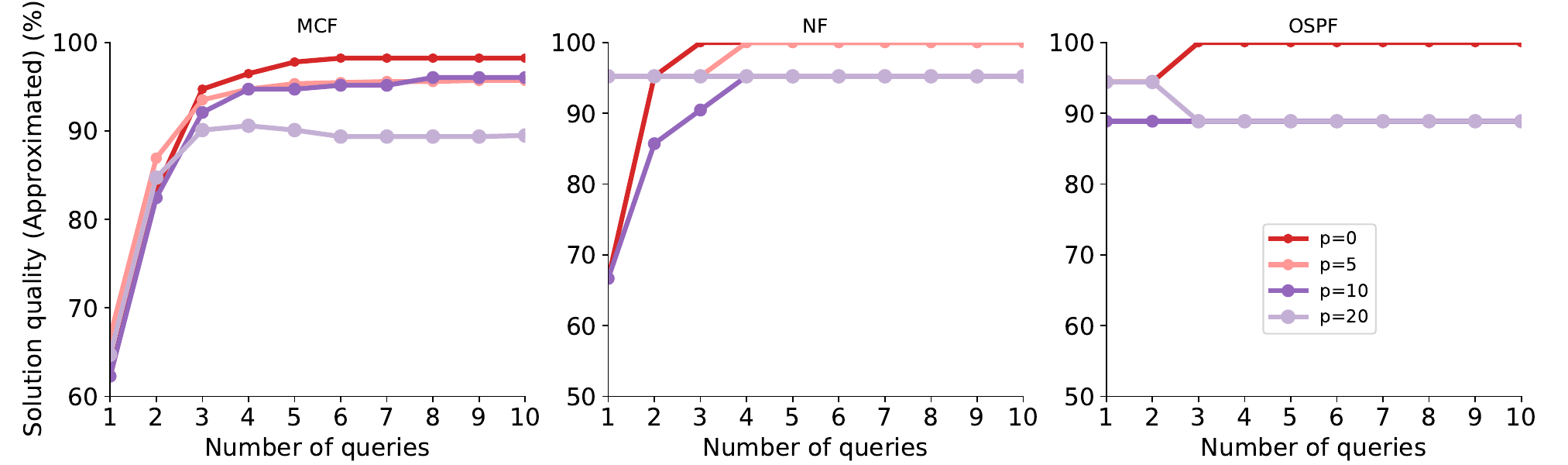}
\end{subfigure}
\caption{Performance of \name{} under different level of inconsistency ($p=0, 5, 10, 20$) on CWIX.}
\label{fig:imperfect10-1}
\end{figure*}

Fig~\ref{fig:robust} showed \name{} outperforms \baseline{} for \textbf{BW} on CWIX. Fig~\ref{fig:cwix_imperfect10} shows the performance of \name{} and \baseline{} when $p=10$ for the other three scenarios, namely, \textbf{MCF}, \textbf{NF} and \textbf{OSPF}. \name{} outperforms in these scenarios as well. Fig~\ref{fig:imperfect10-1} presents the performance of \name{} on CWIX when $p = 0, 5, 10, 20$ for \textbf{MCF}, \textbf{NF} and \textbf{OSPF}.  
While \name{} shows some performance degradation at higher inconsistency levels ($p=20$), it still achieved good solution quality.

\subsection{Sensitivity to Size of Pre-Computed Pool}
\label{app:pool}
To examine how sensitive \name{} is to the size of the pre-computed pool, we evaluated \name{} with pools of different size for \textbf{BW} on CWIX, sampled from a larger pool of size 5000. Fig~\ref{fig:pool} shows the solution quality achieved by \name{} after 10 queries. The solution quality in all cases was
based on the rank computed on the entire pool of size 5000.
The results indicate that performance does not substantially improve beyond a pool size of 300. 


\begin{figure}
\centering
  \includegraphics[width=0.5\columnwidth]{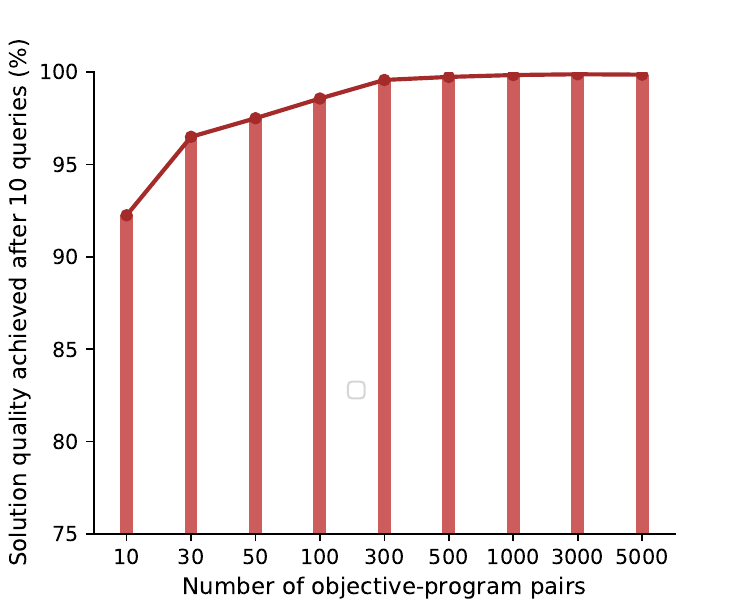}
  \caption{Performance of \name{} with different size of pre-computed pool for \textbf{BW} on CWIX.}
\label{fig:pool}
\end{figure}

\end{document}